\newtheorem{theorem}{Theorem}
\newtheorem{lemma}{Lemma}
\newtheorem{proposition}{Proposition}
\newtheorem{corollary}{Corollary}
\newtheorem{remark}{Remark}
\begin{document}
%
\title{Outage Minimization for a Fading Wireless Link with Energy Harvesting Transmitter and Receiver}


\author{Sheng~Zhou,~\IEEEmembership{Member,~IEEE}, Tingjun~Chen, Wei~Chen,~\IEEEmembership{Senior Member,~IEEE}, Zhisheng~Niu,~\IEEEmembership{Fellow,~IEEE}
\thanks{This work is sponsored in part by the National Science Foundation of China (NSFC) under grant No. 61201191, the National Basic Research Program of China (973 Program: No. 2012CB316001 and No. 2013CB336600), the NSFC under grant No. 61322111, No. 61321061, No. 61401250 and No. 61461136004, and Hitachi Ltd. Part of this work has been presented at the International Symposium on Modeling and Optimization in Mobile, Ad Hoc and Wireless Networks (WiOpt), May 2014 \cite{Chen14}.

Sheng Zhou, Wei Chen and Zhisheng Niu are with Tsinghua National Laboratory for Information Science and Technology, Dept. of
Electronic Engineering, Tsinghua University, Beijing 100084, China. Email:
\{sheng.zhou, wchen, niuzhs\}@tsinghua.edu.cn.

Tingjun Chen was previously with Dept. of Electronic Engineering, Tsinghua University, Beijing, China. He is now with the Department of Electrical
Engineering, Columbia University, New York, NY, USA. Email: tingjun@ee.columbia.edu.
}}

\maketitle

\begin{abstract}

    This paper studies online power control policies for outage minimization in a fading wireless link with energy harvesting transmitter and receiver. The outage occurs when either the transmitter or the receiver does not have enough energy, or the channel is in outage, where the transmitter only has the channel distribution information. Under infinite battery capacity and without retransmission, we prove that threshold-based power control policies are optimal. We thus propose disjoint/joint threshold-based policies with and without battery state sharing between the transmitter and receiver, respectively. We also analyze the impact of practical receiver detection and processing on the outage performance. When retransmission is considered, policy with linear power levels is adopted to adapt the power thresholds per retransmission. With finite battery capacity, a three dimensional finite state Markov chain is formulated to calculate the optimal parameters and corresponding performance of proposed policies. The energy arrival correlation between the transmitter and receiver is addressed for both finite and infinite battery cases. Numerical results show the impact of battery capacity, energy arrival correlation and detection cost on the outage performance of the proposed policies, as well as the tradeoff between the outage probability and the average transmission times.

\end{abstract}

\begin{IEEEkeywords}
Energy harvesting, outage minimization, circuit power, power control, finite state Markov chain.
\end{IEEEkeywords}

\section{Introduction}
Wireless transceivers powered by renewable energy are becoming more and more appealing due to their ease of deployment and environment friendliness \cite{Zorzi14}, of which the feasibility is confirmed by measurements \cite{Zussman11}. When the transceivers in wireless links are powered by energy harvesting (EH), due to the randomness of energy arrivals, the transmission can fail when the energy stored in the battery is insufficient, and thus transmission reliability and throughput are degraded. This calls for new ways of power control policies adapting themselves to not only the channel fading, but also random energy arrivals at \emph{both} the transmitter and the receiver.

There have been some recent studies about EH-based wireless transmissions, specifically dealing with the multi-fold randomness from energy arrival, data arrival and channel fading, and their main focuses are on EH transmitters. With infinite battery capacity, optimal power control policies are studied in \cite{Sharma10} and \cite{Thompson12} to stabilize the system under random packet arrivals. Ref. \cite{HuangCh14} proves that the threshold-based power control is optimal for single link outage minimization over finite time horizon, and is asymptotic optimal for infinite time horizon. Ref. \cite{Huang13} targets at minimizing the link outage probability with interference from other transmitters, where the author uses random walk theory to find the transmission probability of the EH transmitter. With finite battery capacity, optimal offline water-filling algorithms are developed in \cite{Ozel11,Yener12,RZhang2012}, where the optimal directional water-filling is found based on the causality constraint of using the harvested energy, and suboptimal online policies are also investigated. Our previous work \cite{Gong2013} extends their work to the case with the hybrid transmitter powered by EH and power grid. Ref. \cite{Koksal2013} identifies the optimal power control policy and corresponding performance limit with finite data and energy storages under concave utility functions. Besides the transmission power, some recent studies \cite{Erkip13,Ozel13,Xu14} deal with the circuit power consumption at the transmitter, and ref. \cite{Luo13} addresses the practical half-duplex constraint on the actions of energy harvesting and transmission. To improve the reliability, there have been recent efforts on analyzing packet retransmission mechanisms for EH communication systems\cite{Medepally12}, and ref. \cite{Aprem13} studies an AQR-based retransmission mechanism and applies Markov Decision Processes (MDP) to obtain the optimal power control policy.

However, most of the previous works do not consider the existence of EH receiver, while the energy consumption for receiving cannot be ignored in applications like wireless sensor networks (WSN) with short transmission distances \cite{Cui05,Joseph09,Grover11}. This motivates us to investigate the wireless link consisting of EH transmitter \emph{and} receiver, with circuit blocks, as well as the decoding modules \cite{Cui05} taken into account. It is noted that some works deal with two-hop transmissions with EH source and relays\cite{Thompson12}\cite{Gunduz11}\cite{Huangch13JSAC}, but the energy harvested at the relay is used for the second hop transmission, not for receiving. Among the few, ref. \cite{Joseph09} considers receiving power for multi-hop WSN with EH, and efficient routing and node sleeping algorithms are proposed. With another line of research, energy cooperation with wireless powering is considered \cite{Ulukus13-energcooperation} \cite{Jimin_2013}, where the receiver is powered by the energy transmission from the transmitter. The main difference is that in their cases, the transmitter intentionally delivers the energy on demand to the receiver, while in our case, both sides may have no knowledge of the energy condition of the other.

In this paper, we focus on the outage minimization problem of a fading wireless link with EH transmitter and receiver, under an acknowledgement (ACK) based retransmission mechanism. Only channel distribution information (CDI) is assumed known at the transmitter. Different levels of the battery state information (BSI) sharing between the two nodes are considered. As both nodes are EH-based, it is also nature to investigate the correlations between the energy arrival processes at both sides. Different from \cite{Joseph09}, where static channel is considered and the goal is capacity maximization, we put more focuses on the outage performance of EH links, considering the two-fold randomness from EH and channel fading. The main objective of our research is to find the optimal power control policy, and to investigate how the energy arrival correlation, the battery capacity and different receiver structures can impact the outage performance of a link. The main contributions include:

1) With infinite battery capacity and no retransmission, we prove that threshold-based power control is optimal when the EH process is independent between the transmitter and receiver. The optimality holds regardless of whether BSI is shared or not. We thus provide the closed-form expressions for the optimal power thresholds of the disjoint policy without BSI sharing, and for the joint policy with BSI sharing, respectively. We further investigate the impact of EH correlation between the two nodes, and show that the joint threshold-based policy is still optimal.

2) Based on queuing theory, we provide a simple yet powerful framework of deriving the transmission probability, and the receiving probability of EH links. For the transmission probability, the assumption on the \emph{i.i.d.} EH process in \cite{Huang13} is generalized to stationary and ergodic EH process. The derived transmission and receiving probability further enable directly proving the optimality of the threshold-based policies, which extends the results in \cite{HuangCh14} to the case when both the transmitter and receiver are EH-based, with the consideration of EH process correlation between the two sides, the circuit power, the receiver detection and processing, different levels of BSI and CSI sharing. We also investigate the case when the receiver knows the policy of the transmitter, and show how this information helps improving the system performance.

3) When retransmissions are taken into account, we show that the derived optimal power thresholds for non-retransmission case are still near-optimal. To further adapt the threshold per retransmission, a modified disjoint threshold-based policy called linear power levels policy is adopted, and the choice of the starting power level is investigated.

4) Under finite battery capacity, the proposed policies are analyzed via a three dimensional finite state Markov chain (FSMC) model, which allows the calculation of the optimal power thresholds and evaluation of their performance. The numerical results via the FSMC analysis provide insights to the impact of the system parameters, especially the battery capacity and EH correlation between the transmitter and receiver, on the outage performance of an EH link.

The remainder of this paper is organized as follows. Section \ref{sec:model} sets up the system model, and three power control policies are introduced in Section \ref{sec:pol}. In Section IV, we analyze and prove the optimality of the policies with infinite battery capacity. We investigate our policies with finite battery capacity using FSMC in Section V, and we propose the corresponding local searching algorithm for the optimal thresholds. Numerical results are presented in Section VI, and are followed by our conclusion in Section VII.

\section{System Model and Problem Formulation}
\label{sec:model}
We consider a fading wireless link consisting of one transmitter (denoted by source node $S$) and one receiver (denoted by destination node $D$). Both nodes are powered by batteries of capacity $B_{\max}$ and are capable of harvesting energy from the surrounding environment. Discrete-time model is used, where the time axis is partitioned into slots denoted by $t \in \{0, 1, 2, \cdots \}$. With unit slot length, we will \emph{use energy and power, and their corresponding units interchangeably}.

\subsection{Channel Model}

We consider frequency non-selective block Rayleigh fading channel, where the channel $h$ keeps constant within one time slot, and varies independently from slot to slot, and $\mathbf{E}\{|h|^2\} = 1$. There is only CDI at $S$, while $D$ has CSI. The transmission rate $R$ is assumed fixed. We assume unit bandwidth, and thus $R$ is in the unit of spectrum efficiency. Channel outage occurs when the mutual information between $S$ and $D$ is less than $R$, where we assume one packet per slot. The corresponding channel outage probability $\text{Pr}\left\{ |h|^{2} < \frac{(2^{R}-1)z}{P^t_{\text{tx}}} \right\}$ in any slot $t$ is \cite{Tse05}
\begin{equation}
	p(P^t_{\text{tx}}) =
\left\{
	\begin{aligned}
	& 1 - \exp \left( - \frac{(2^{R}-1)z}{P^t_{\text{tx}}} \right), &P^t_{\text{tx}} > 0,\\
	&1, &P^t_{\text{tx}} = 0,
	\end{aligned}
\right.
\label{eq:outage}
\end{equation}
where $P^t_{\text{tx}}$ is the transmission power at the $S$ in slot $t$, and $z$ denotes the power of additive white Gaussian noise.

\subsection{Energy Consumption Model}
\label{sec:energymodel}
With the consideration of circuit power, the power consumption at $S$ in slot $t$ is given by
\begin{eqnarray}
P_{S}^{t} &=& (1 + \alpha) P_{\text{tx}}^{t} + P_{C, S}\mathbf{1}_{P_{\text{tx}}^{t}>0}, \label{PS}
\end{eqnarray}
where $P_{C, S}$ denotes the static power consumption of the circuit blocks, and $(1 + \alpha) P_{\text{tx}}^{t}$ represents the power consumption for the power amplifier, where $\alpha>0$ mainly relates to the drain efficiency. The indicator function $\mathbf{1}_A$ equals to 1 if event $A$ is true, otherwise it equals to 0. Assume that if $S$ is not transmitting, i.e., $P_{\text{tx}}^{t} = 0$, the circuit does not consume any power, i.e., $P_{S}^{t} = 0$. In practise, transceivers may consume some power even when it is not transmitting or receiving \cite{Cui05}, therefore this assumption implies that $S$ can switch on and off the transceiver instantaneously, or the start-up time is negligible. Similar assumption is made for $D$. Note that at the beginning of each slot, $D$ does not know whether $S$ is transmitting or not in advance. Therefore if there is enough energy in the battery, $D$ will be active, and its power consumption in slot $t$ is given by
\begin{equation}
P_{D}^{t} = P_{D}, \label{PD}
\end{equation}
where $P_{D}$ includes two parts: (1) The power consumption of the receiving circuit, and the decoding module depending on whether the link is coded or uncoded \cite{Grover11}; (2) the power for feeding back ACK to $S$, which will be described in Section \ref{subsect:retransmission}.

To further improve the energy utilization efficiency at $D$, in Section \ref{sec:recv_detect}, we consider that $D$ first spends $\xi$ fraction of the energy $\xi P_D$ to detect whether there is an on-going transmission from $S$, where $\xi \in [0,1]$. In this way, if $D$ is active, the power consumption at $D$ in slot $t$ is
\begin{equation}
P_{D}^{t} =
\left\{
	\begin{aligned}
	\xi P_{D}, &\quad P_S^t = 0,\\
	P_{D}, &\quad P_S^t > 0,
	\end{aligned}
\right. \label{PD-detect}
\end{equation}
where the first case means that $S$ is not transmitting, and thus $D$ will switch off after consuming $\xi P_{D}$ amount of detection energy; otherwise in the second case $D$ will keep receiving for the whole slot when $S$ is transmitting. The power consumption model can be further extended to include data processing power at $D$, which will also be discussed in Section \ref{sec:recv_detect}.

\subsection{Energy Harvesting Model and Battery State Evolution}

Denote the harvested energy in slot $t$ at $S$ and $D$ as $\{E_{S}^{t} \}, \{E_{D}^{t} \} \subset \mathbb{R}_{\ge0}$, which are stationary and \emph{ergodic} sequences over time. Their means are $\mathbf{E} \left[E_{S}^{t} \right] = \lambda_{S}$ and $\mathbf{E} \left[E_{D}^{t} \right] = \lambda_{D}$ respectively. The EH processes can also be correlated between $S$ and $D$, and the correlation coefficient is defined as $\rho = \frac{\text{cov}(E_{S}^{t}, E_{D}^{t})}{\sigma_{E_{S}^{t}} \sigma_{E_{D}^{t}}}$, where $\text{cov}(\cdot)$ is the covariance function, $\sigma_{E_{S}^{t}}$ and $\sigma_{E_{D}^{t}}$ represent the standard variances of $E_{S}^{t}$ and $E_{D}^{t}$.
Let $B_{S}^{t}$ and $B_{D}^{t}$ denote the battery energy at the beginning of slot $t$, of $S$ and $D$ respectively. The battery gets replenished whenever a node harvests energy at the end of the slot. The evolution of the battery states at $S$ and $D$ are
\begin{align}
	B_{S}^{t+1}  & =  \min \left( B_{S}^{t} - P_{S}^{t} + E_{S}^{t}, B_{\max} \right), \label{eq:BSevo}\\	
    B_{D}^{t+1}  & =  \min \left( B_{D}^{t} - P_{D}^t + E_{D}^{t}, B_{\max} \right),
\end{align}
where $B_{\max}$ denotes the battery capacity. Note that the above equation implicitly requires $B_{S}^{t} \geq P_{S}^{t}$ and $B_{D}^{t} \geq P_{D}$ respectively, and these will be enforced by the power control policies, where we remark that $S$ and $D$ know their \emph{own} battery state. Remark that the requirement $B_{D}^{t} \geq P_{D}$ holds even for the case with receiver detection described in (\ref{PD-detect}). This is because at the beginning of each slot, as $D$ does not know whether $S$ is transmitting or not, it has to ensure that $B_{D}^{t} \geq P_{D}$, to guarantee the reception for the whole slot in case $S$ is transmitting.

\subsection{Packet Retransmission}
\label{subsect:retransmission}

In slot $t$, the packet transmission may fail if: (1) $S$ does not have sufficient energy to transmit the packet so that $S$ set $P_{\text{tx}}^t = 0$; (2) The packet is not transmitted successfully due to the channel outage with probability $p(P_{\text{tx}}^t)$; (3) $D$ does not have sufficient energy to receive the packet, i.e., $B_{D}^{t} < P_{D}$. Equivalently, the per-slot transmission successful probability $\phi$ is given by
\begin{equation}
\begin{aligned}
\label{eq:phi}
\phi & = \text{Pr} \{\text{no channel outage} | P_{\text{tx}}^t >0\} \text{Pr} \{ P_{\text{tx}}^t >0, B_D^t \ge P_D \} \\
    & = [1-p(P_{\text{tx}}^t)] \text{Pr} \{ P_{\text{tx}}^t >0, B_D^t \ge P_D \},
\end{aligned}
\end{equation}
where the second equality comes from the independence between the battery states and the channel condition, and note $p(P_{\text{tx}}^t)$ is defined in (\ref{eq:outage}).

The retransmission is based on an ACK-based mechanism, in which ACKs are fed back by $D$ over an error free channel to $S$. If the transmission is successful, $S$ will receive an ACK from $D$ and move on to the next packet. If $S$ does not receive an ACK or it does not transmit due to insufficient battery energy, it retries until it receives an ACK, or reaches the retry limit and moves on to the next packet. We assume: (1) The time for transmitting ACK is negligible; (2) The power consumed for sending ACK is included in $P_{D}$; (3) If channel outage occurs, even without feeding back ACK, the power consumption of $D$ is still $P_{D}$.

As we allow $K - 1$ times of retransmission ($K$ times of transmission attempts in total), the retransmission state is denoted by $u \in \mathcal{U}$, defined by
\begin{equation}
u = \left\{
	\begin{aligned}
	&-1, && \text{ACK received, start a new packet} \\
	&0, && \text{last packet in outage, start a new packet}\\
	&k, && k \text{-th retransmission,} ~ k \in \{1, \cdots, K-1 \}.
	\end{aligned}
\right.
\label{ACK}
\end{equation}
With retransmissions, the final outage probability of a packet $p_{\text{out}}$ is the probability that the packet can not be successfully transmitted within $K$ slots. Please note that with our definition, the ``retransmission" includes the failed transmission attempts due to insufficient energy at $S$.

\subsection{Outage Minimization Problem}
This paper considers the online power control policies at $S$ in order to minimize the packet outage probability $p_{\text{out}}$:
\begin{equation}
\min_{\mathcal{P}} p_{\text{out}},
\end{equation}
where $\mathcal{P}$ denotes the set of all online stationary policies, where in a policy $S$ decides $P_S^t$, based on CDI, average energy arrival rate $\lambda_S$, its BSI (or including $\lambda_D$ and $B_D^t$ at $D$ depending on the level of BSI sharing) and the retransmission state $u$. In this paper we assume the data source of $S$ is backlogged. Our analysis for the backlogged scenario is also valuable for cases with random packet arrivals, in terms of determining the stability condition of the data queue. Specifically, to keep the data queue stable, the bit arrival rate should be less than $R(1-p_{\text{out}})$. Even though the energy buffer and the data queue are coupled at $S$, this stability condition is proved in \cite{Ephremides_2012} by assuming $S$ transmits dummy packets even when the data queue is empty.

\section{Power Control Policies}
\label{sec:pol}
In this section, three power control policies are presented. Their performance will be analyzed in later sections. Note that by default $D$ does not know what policy $S$ is taking, and we will discuss the impact of knowing the policy of $S$ at $D$ in Section \ref{sec:DknowS}.

\subsection{Disjoint Threshold-based Policy}
In this policy, the possible values of $P_{S}^{t}$ are chosen from a binary set and $S$ only transmits when the energy stored in its own battery is larger than a threshold. Specifically,
\begin{equation}
P_S^t =
\left\{
	\begin{aligned}
	P_S, & \quad \text{If } B_S^t \geq P_S, \\
	0, & \quad \text{else},
	\end{aligned}
\right.
\end{equation}
where according to (\ref{PS}), $P_S > P_{C,S}$ is required, and the corresponding transmission power is $P_{\text{tx}}^t = \frac{P_{S} - P_{C, S}}{1 + \alpha}$. The battery state evolution under this policy at $S$ is given by
\begin{equation}
B_{S}^{t+1} = \min ( B_{S}^{t} - P_{S} \mathbf{1}_{B_{S}^{t} \geq P_{S}} + E_{S}^{t}, B_{\max}).
\end{equation}
The optimization variable of the policy is to determine the best threshold $P_S$. Note that in this policy, each node does not know the BSI of the other.

\subsection{Joint Threshold-based Policy}
This is a genie-aided threshold-based policy that each node knows whether the other node will transmit/receive or not in current time slot. Under this policy, $S$ transmits only when both $B_{S}^{t} \geq P_{S}$ and $B_{D}^{t} \geq P_{D}$ hold, and so does $D$ to make receiving decision. The battery state evolution under this policy is
\begin{equation}
\label{eq:battery_joint}
B_{\beta}^{t+1} = \min (B_{\beta}^{t} - P_{\beta}\mathbf{1}_{B_{S}^{t} \geq P_{S}, B_{D}^{t} \geq P_{D}} + E_{\beta}^{t}, B_{\max}),
\end{equation}
where $\beta \in \{S,D\}$. The optimization variable of the policy is the threshold $P_S$.

\subsection{Linear Power Levels Policy}
The previous two policies do not take the retransmission states into account, while this policy is a modified version of the \emph{disjoint} threshold-based policy with non-decreasing multi-level thresholds $\{ P_{S}(u) \}, u \in \mathcal{U}$, i.e.,
\begin{equation}
P_{S}(u) = P_{S}(0) + \Delta \cdot u, \quad u=0,\dots, K-1,
\end{equation}
where $\Delta$ is the step value, and note $P_{S}(-1) = P_{S}(0)$. The battery state evolution at $S$ under this policy is given by
\begin{equation}
B_{S}^{t+1} = \min (B_{S}^{t} - P_{S}(u) \mathbf{1}_{B_{S}^{t} \geq P_{S}(u)} + E_{S}^{t}, B_{\max}).
\end{equation}
The optimization variables of the policy are $P_S(0)$ and $\Delta$.

\section{Infinite Battery Capacity}
\label{sec:infinte}
In this section, we will analyze the outage probability when both nodes are equipped with infinite battery capacity. The optimality of the threshold-based policies is discussed. Unless otherwise specified, the theorems, lemmas, propositions and corollaries have the following default conditions: The battery capacity $B_{\max} = +\infty$. The energy arrivals $\{E_{S}^{t} \}$ and $\{E_{D}^{t} \}$ are stationary and ergodic processes, with $\mathbf{E} \left[E_{S}^{t} \right] = \lambda_{S}$ and $\mathbf{E} \left[E_{D}^{t} \right] = \lambda_{D}$ respectively.

\subsection{Disjoint Threshold-based Policy with $K=1$}
\label{sec:disjoint}
When $K=1$, there is no retransmission. According to the definition in (\ref{eq:phi}), we denote $\phi(P_S)$ as the transmission successful probability in a slot, with threshold $P_S$. The policy design is to find the best threshold $P_S$ so that $\phi(P_S)$ is maximized, written as
\begin{equation}
\label{eq:P1}
\text{(P1)} ~ \max_{P_{S} \geq 0} ~ \phi(P_{S}) \triangleq \exp \left[ - \frac{(2^{R}-1)z}{P_{\text{tx}}} \right] \Psi(P_{S}, P_{D}),
\end{equation}
where recall that $P_{\text{tx}} = (P_S-P_{C,S})/(1+\alpha)$, and $\Psi(P_{S}, P_{D})$ is the probability that $S$ has enough energy to transmit and $D$ has enough energy to receive, i.e.,
\begin{equation}
\Psi(P_{S}, P_{D}) = \lim_{n \to +\infty} \frac{1}{n} \sum_{t = 1}^{n} \mathbf{E} \left[ \mathbf{1}_{B_{S}^{t} \geq P_{S}, B_{D}^{t} \geq P_{D} } \right].
\end{equation}

\begin{figure}[!t]
\centering
\includegraphics[width=2.8in]{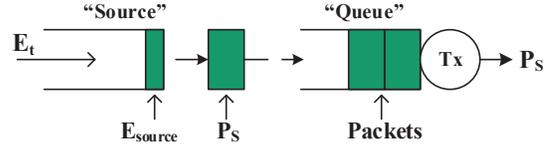}
\caption{The ``energy packet" queue.}
\label{fig:energyQueue}
\end{figure}

To solve (P1), we need to calculate $\Psi(P_{S}, P_{D})$, before which two lemmas are proposed.
\begin{lemma}
\label{lem:transprob}
Denote the transmission probability of $S$ as
\begin{equation}
\Psi_S = \lim_{n \to +\infty} \frac{1}{n} \sum_{t = 1}^{n} \mathbf{E} \left[ \mathbf{1}_{B_{S}^{t} \geq P_{S}} \right].
\end{equation}
Under the disjoint threshold-based policy, we have
\begin{equation}
\label{eq:transprob}
\Psi_S = \min\left(1, \frac{\lambda_S}{P_S}\right).
\end{equation}
\end{lemma}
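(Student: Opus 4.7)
\emph{Approach.} The claim is a statement about the long-run fraction of slots in which the battery at $S$ exceeds the threshold $P_S$. As hinted by Fig.~\ref{fig:energyQueue}, the natural viewpoint is that of an ``energy packet'' queue: the arrivals are the harvested amounts $E_S^t$ (mean $\lambda_S$) and a chunk of size $P_S$ is drained in every slot in which the queue level is above $P_S$. The target $\min(1,\lambda_S/P_S)$ is exactly the utilization of such a queue, so I would derive it as a rate-balance identity rather than by computing any stationary distribution.

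\emph{Steps.} Since $B_{\max}=+\infty$, the recursion (\ref{eq:BSevo}) telescopes cleanly: writing $N_t := \sum_{s=0}^{t-1}\mathbf{1}_{B_S^s \geq P_S}$ for the transmission count up to slot $t$,
\[
B_S^t \;=\; B_S^0 \,+\, \sum_{s=0}^{t-1} E_S^s \,-\, P_S\, N_t.
\]
Dividing by $t$ and applying the pointwise ergodic theorem to the stationary ergodic input $\{E_S^t\}$ and to the battery-driven indicator $\{\mathbf{1}_{B_S^t \geq P_S}\}$ (started in its Loynes stationary version), I obtain the a.s.\ identity
\[
\lim_{t\to\infty} \frac{B_S^t}{t} \;=\; \lambda_S - P_S\,\Psi_S.
\]
Since $B_S^t\geq 0$ for every $t$, the left-hand side is non-negative, yielding $\Psi_S \leq \lambda_S/P_S$; combined with the trivial $\Psi_S\leq 1$ this is the upper bound.

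For the matching lower bound I would argue by contradiction. Suppose $\Psi_S < \min(1,\lambda_S/P_S)$. Then $\Psi_S < \lambda_S/P_S$ forces $\lim_t B_S^t/t>0$, hence $B_S^t \to +\infty$ a.s., while $\Psi_S < 1$ forces the set $\{t : B_S^t < P_S\}$ to have strictly positive asymptotic density and in particular to be infinite --- impossible once $B_S^t$ is eventually above $P_S$. Pinching the two bounds together gives $\Psi_S = \min(1,\lambda_S/P_S)$.

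\emph{Main obstacle.} The delicate step is the a.s.\ convergence $N_t/t\to\Psi_S$ under only stationarity and ergodicity (not i.i.d.)\ of $\{E_S^t\}$: the indicator sequence $\{\mathbf{1}_{B_S^t\geq P_S}\}$ depends on the deterministic initial state $B_S^0$ and is therefore not itself a stationary process. The standard remedy is a Loynes-type backward coupling to a stationary solution driven by the same ergodic input, after which Birkhoff's theorem applies pathwise; this is also precisely the technical point where the generalization from the i.i.d.\ arrivals of \cite{Huang13} to general stationary ergodic arrivals, claimed as a contribution in the introduction, actually enters the analysis.
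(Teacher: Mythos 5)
Your plan is correct and reaches the same conclusion, but by a genuinely different route from the paper. The paper builds an explicit auxiliary ``energy packet'' queue: harvested energy is chopped into packets of size $P_S$ (with a remainder kept in a ``source''), the packet arrival rate is shown by telescoping to be $\lambda_S/P_S$ as in (\ref{eq:lambda_eq}), the queue length is identified with $\lfloor B_S^t/P_S\rfloor$ as in (\ref{eq:NtBS}), and then the saturated case $\lambda_S\ge P_S$ is handled by a contradiction on the empty-queue epochs while the stable case invokes Little's law (busy probability equals load). You instead telescope the battery recursion (\ref{eq:BSevo}) directly, obtaining the flow-balance identity $B_S^t = B_S^0+\sum_s E_S^s - P_S N_t$, from which the upper bound $\Psi_S\le\min(1,\lambda_S/P_S)$ follows from $B_S^t\ge 0$ and the lower bound from a contradiction. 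This is more elementary (no auxiliary queue, no Little's law) and makes the rate-conservation mechanism transparent; note the upper bound needs no ergodic theory at all if you take expectations of the telescoped identity and use stationarity of $\{E_S^t\}$. What your route gives up is that the paper's queue construction isolates exactly the object on which monotone (Loynes-type) coupling works: the one-step map $B\mapsto B-P_S\mathbf{1}_{B\ge P_S}+E$ is \emph{not} monotone in $B$ (take $B_1=P_S-\epsilon<B_2=P_S$), whereas the integer queue-length recursion $N\mapsto N-\mathbf{1}_{N\ge 1}+n$ is, so the stationarization you defer to ``a Loynes-type backward coupling'' should be carried out at the level of $\lfloor B_S^t/P_S\rfloor$ rather than $B_S^t$ itself. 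That acknowledged step --- a.s.\ convergence of $N_t/t$ and its identification with the Ces\`{a}ro limit of expectations defining $\Psi_S$ --- is the only real gap in your sketch, and it is present just as implicitly in the paper's appeal to Little's law; with it in place, your pinching argument closes (if $\lambda_S-P_S\Psi_S>0$ a.s.\ then $B_S^t\to\infty$, so $\mathbf{E}[\mathbf{1}_{B_S^t<P_S}]\to 0$ by bounded convergence and $\Psi_S=1$, a contradiction).
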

\begin{proof}
Without loss of generality, assume initially the battery is empty, i.e., $B_S^1=0$. The proof is based on constructing a equivalent energy ``packet" queue $\mathcal{Q}$ reflecting the energy arrival and usage process. The queue operates in a slotted manner. At the end of each slot $t$, the arrived energy is firstly stored in a ``packet source", of which the energy at the beginning of a slot is $E_{\text{source}}^t$. Right at the energy arrival, the ``source" generates energy ``packets", each with volume $P_S$, into $\mathcal{Q}$ at the end the slot. Correspondingly the number of generated packets is $n_t = \lfloor \frac{E_{\text{source}}^t+E_S^t}{P_S}\rfloor$ ($\lfloor \cdot \rfloor$ is the floor operator, and $n_t \in \mathbb{Z}_{\ge0}$), and then the amount of energy
\begin{equation}
\label{eq:Etevo}
E_{\text{source}}^{t+1} = E_{\text{source}}^{t} - n_tP_S+E_S^t
\end{equation}
remains in the ``source", and here we always have $E_{\text{source}}^t < P_S$. Note that $\{n_t\}$ solely evolves with the stationary and ergodic sequence $\{E_S^t\}$, and thus is also stationary and ergodic.
The generated packets are then buffered in $\mathcal{Q}$ with \emph{infinite} capacity. The constructed queue is illustrated in Fig.~\ref{fig:energyQueue}. The ``energy packet" arrival rate of $\mathcal{Q}$, defined as the average arrived packets per slot, is
\begin{eqnarray}
\label{eq:lambda_eq}
\tilde{\lambda}_{\text{ep}} & = & \lim_{n \to +\infty} \frac{\sum_{t=1}^{n}\mathbf{E}\left[n_t\right]}{n} \notag \\
& \overset{\text{(a)}}{=} & \frac{1}{P_S}\lim_{n \to +\infty} \frac{\sum_{t=1}^{n-1}\mathbf{E}[E_S^t]-\mathbf{E}[E_{\text{source}}^{n}]}{n} \notag \\
&\overset{\text{(b)}}{=}&  \frac{\lambda_S}{P_S},
\end{eqnarray}
where (a) holds by adding the two sides of (\ref{eq:Etevo}) from $t=1$ to $t=n$ and then taking expectations, with initial values $E_{\text{source}}^{1} = B_S^1 = 0$; (b) holds because $E_{\text{source}}^{n} < P_S$.

The ``server" of $\mathcal{Q}$  serves one packet per slot if the queue is not empty, i.e., the service rate is $\tilde{\mu}_{\text{ep}}=1$, this corresponds to the policy that $S$ transmits if $B_S^t \geq P_S$. Therefore, denoted by $N_t$ the number of energy packets in $\mathcal{Q}$ at the beginning of time slot $t$, its evolution is
\begin{equation}
N_{t+1} = N_t - \mathbf{1}_{N_t\ge1} + n_t,
\end{equation}
Comparing with (\ref{eq:BSevo}) and noting that we have infinite battery here, it can be easily checked that
\begin{equation}
\label{eq:NtBS}
N_t = \left\lfloor \frac{B_S^t}{P_S} \right\rfloor.
\end{equation}
Therefore,
\begin{eqnarray}
\Psi_S & = & \lim_{n \to +\infty} \frac{1}{n} \sum_{t = 1}^{n} \mathbf{E} \left[ \mathbf{1}_{B_{S}^{t} \geq P_{S}} \right] \notag \\
& = & \lim_{n \to +\infty} \frac{1}{n} \sum_{t = 1}^{n} \mathbf{E} \left[ \mathbf{1}_{N_t \geq 1} \right], \label{eq:little10}
\end{eqnarray}
where the last equality holds due to (\ref{eq:NtBS}). For $\Psi_S$, First consider $\tilde{\lambda}_{\text{ep}} \ge 1$, and we prove $\Psi_S = 1$ by contradiction. Assume $\Psi_S < 1$, and thus for some subsequence of $\{N_t\}$, denote slot sequence $\{t_i\} \subset \mathbb{N}$ as those satisfy $N_{t_i} =0$, and $t_1 < t_2 <\dots$, where the sequence $\{t_i\}$ must be infinite (and thus $\{N_t\}$ is stationary and ergodic), otherwise $\Psi_S = 1$ immediately. Note $\forall t_i >1$, we must have $n_{t_i-1} = 0$, and
\begin{align}
\sum_{t=t_i}^{t=t_{i+1}-2}n_t = t_{i+1}-t_i-1,& \quad \text{if $t_{i+1} > t_i+1$}, \\
n_{t_i} = 0,& \quad \text{if $t_{i+1} = t_i+1$},
\end{align}
where the above equations basically indicate that \emph{between} two adjacent slots with empty queue $\mathcal{Q}$, the number of arrived packets equals to the number of slots between these two slots, because the queue evolves from empty to empty, and each slot exactly consumes one packet. Then
\begin{align}
\lim_{n \to +\infty} \frac{\sum_{t=1}^{n}n_t}{n} &= \lim_{n \to +\infty} \frac{0\cdot\sum_{t=1}^n \mathbf{1}_{N_{t+1}=0}\!+\! \sum_{t=1}^n \mathbf{1}_{N_{t+1}>0}}{n} \notag \\
&= \Psi_S\quad \text{a.s.},
\end{align}
while  $\lim_{n \to +\infty} \frac{\sum_{t=1}^{n}n_t}{n} = \tilde{\lambda}_{\text{ep}}$ almost surely as $\{n_t\}$ is stationary and ergodic, which means $\tilde{\lambda}_{\text{ep}} < 1$ as we assume $\Psi_S < 1$, which contradicts the fact that $\tilde{\lambda}_{\text{ep}} \ge 1$. Therefore $\Psi_S=1$.

When $\tilde{\lambda}_{\text{ep}} < 1$, according to Little's Law \cite{Kleinrock}, $\lim_{n \to +\infty} \frac{1}{n} \sum_{t = 1}^{n} \mathbf{E} \left[ \mathbf{1}_{N_t \geq 1} \right]$ is the server busy probability of $\mathcal{Q}$, and equals to the load of the queue, i.e.,
\begin{equation}
\Psi_S = \frac{\tilde{\lambda}_{\text{ep}}}{\tilde{\mu}_{\text{ep}}} = \frac{\lambda_S}{P_S},
\end{equation}
where the equality holds due to (\ref{eq:lambda_eq}) and $\tilde{\mu}_{\text{ep}}=1$. The proof is completed.
\end{proof}

Note that Lemma \ref{lem:transprob} coincides with \cite[Theorem 1]{Huang13}, while we provide an alternative proof based on queuing theory, and the \emph{i.i.d.} assumption of $\{E_S^t\}$ is relaxed to be stationary and ergodic. This method further allows us to prove other results in this paper with more complicated situations especially for $D$.
\begin{lemma}
\label{lem:recvprob}
Define the receiving probability of $D$ as
\begin{equation}
\Psi_D = \lim_{n \to +\infty} \frac{1}{n} \sum_{t = 1}^{n} \mathbf{E} \left[ \mathbf{1}_{B_{D}^{t} \geq P_{D}} \right].
\end{equation}
Under the disjoint threshold-based policy, we have
\begin{equation}
\label{eq:recvprob}
\Psi_D = \min\left(1, \frac{\lambda_D}{P_D}\right).
\end{equation}
\end{lemma}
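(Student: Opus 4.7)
The plan is to mirror the proof of Lemma~\ref{lem:transprob} almost verbatim, exploiting the fact that the disjoint threshold-based policy decouples the two nodes' battery evolutions. Under this policy, $D$ is active whenever $B_D^t \ge P_D$ and draws exactly $P_D$ units of energy in such a slot, independently of what $S$ does in that slot. Hence the battery recursion
\begin{equation}
B_{D}^{t+1} = \min\bigl(B_{D}^{t} - P_{D}\mathbf{1}_{B_{D}^{t}\ge P_{D}} + E_{D}^{t},\, B_{\max}\bigr)
\end{equation}
has exactly the same structural form as the one used for $B_S^t$ in Lemma~\ref{lem:transprob}, with $P_S$ replaced by $P_D$ and $\{E_S^t\}$ replaced by $\{E_D^t\}$. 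Since $B_{\max} = +\infty$, the $\min$ is inactive and the dynamics become identical. Importantly, the marginal law of $\{B_D^t\}$ depends only on the marginal of $\{E_D^t\}$, so any correlation with $\{E_S^t\}$ is irrelevant here.

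The first step I would take is to construct the analogue of the energy packet queue $\mathcal{Q}$ at $D$: energy arrives according to the stationary ergodic sequence $\{E_D^t\}$, is accumulated in a ``source'' of residual size $E_{\text{source},D}^t<P_D$, and is chunked into packets of volume $P_D$; the queue drains one packet per slot whenever it is nonempty. Exactly as in (\ref{eq:lambda_eq}), the packet arrival rate is $\tilde{\lambda}_{\text{ep},D}=\lambda_D/P_D$ and the service rate is $\tilde{\mu}_{\text{ep},D}=1$. The identity $N_t^{(D)} = \lfloor B_D^t/P_D\rfloor$ then translates $\Psi_D$ into the server-busy probability of this queue.

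Next I would reuse the two-case dichotomy. If $\lambda_D/P_D \ge 1$, the same contradiction argument as in Lemma~\ref{lem:transprob} applies: assuming $\Psi_D<1$ forces an infinite subsequence of slots with empty queue, whose between-slot packet counts match slot counts, yielding $\lim_n n^{-1}\sum_{t\le n} n_t^{(D)} = \Psi_D < 1$ almost surely; but by stationarity and ergodicity of $\{n_t^{(D)}\}$ this limit equals $\tilde{\lambda}_{\text{ep},D}\ge 1$, a contradiction. Hence $\Psi_D=1$ in this regime. If $\lambda_D/P_D<1$, Little's Law gives the server-busy probability as $\tilde{\lambda}_{\text{ep},D}/\tilde{\mu}_{\text{ep},D} = \lambda_D/P_D$, matching the claimed formula.

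There is no real obstacle here beyond verifying that the disjoint policy truly makes $D$'s consumption independent of $S$; the only subtlety is that $D$ must reserve $P_D$ at the start of a slot because it cannot foresee whether $S$ will actually transmit, which is precisely what the disjoint policy already encodes in the indicator $\mathbf{1}_{B_D^t \ge P_D}$. Once this is pointed out, the entire queue-theoretic machinery of Lemma~\ref{lem:transprob} transfers unchanged, and I would simply write ``the proof is identical to that of Lemma~\ref{lem:transprob} with $P_S, \lambda_S, E_S^t$ replaced by $P_D, \lambda_D, E_D^t$'' to avoid repeating the calculations.
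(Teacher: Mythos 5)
Your proposal is correct and takes exactly the same route as the paper, whose entire proof of this lemma is ``Follow the same procedure of proving Lemma~\ref{lem:transprob}''; you simply spell out the substitution $P_S,\lambda_S,E_S^t \mapsto P_D,\lambda_D,E_D^t$ and the resulting queue construction in more detail. The observation that the disjoint policy makes $D$'s consumption indicator depend only on $B_D^t$ (hence only on $\{E_D^t\}$) is the right justification for why the argument transfers unchanged.
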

\begin{proof}
Follow the same procedure of proving Lemma \ref{lem:transprob}.
\end{proof}

We then have the following Theorem for the simultaneous transmission and receiving probability $\Psi(P_S, P_D)$ when the energy arrivals at $S$ and $D$ are mutually \emph{independent}, i.e., $\rho=0$.

\begin{theorem}
\label{th:Psi}
When EH processes of $S$ and $D$ are mutually independent, and BSI is not shared. We have $\Psi(P_{S}, P_{D})$ under the disjoint threshold-based policy given by
\begin{align}
\Psi(P_{S}, P_{D}) = \min \left( 1, \frac{\lambda_{S}}{P_{S}}, \frac{\lambda_{D}}{P_{D}}, \frac{\lambda_{S}\lambda_{D}}{P_{S} P_{D}} \right).
\end{align}
\end{theorem}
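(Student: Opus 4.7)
The plan is to reduce the joint event $\{B_S^t\ge P_S,\,B_D^t\ge P_D\}$ to a product of its two marginals via independence and then appeal to Lemmas \ref{lem:transprob} and \ref{lem:recvprob}. The central structural observation is that under the disjoint threshold-based policy, $S$'s action in slot $t$ is a deterministic function of $B_S^t$ alone, so the recursion $B_S^{t+1}=B_S^t-P_S\mathbf{1}_{B_S^t\ge P_S}+E_S^t$ is driven solely by the energy-arrival sequence $\{E_S^\tau\}_{\tau\le t-1}$, and symmetrically $B_D^t$ depends only on $\{E_D^\tau\}_{\tau\le t-1}$. Since $\{E_S^t\}$ and $\{E_D^t\}$ are mutually independent by assumption ($\rho=0$), the two battery processes $\{B_S^t\}$ and $\{B_D^t\}$ are mutually independent as well.

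Using this per-slot independence, I would write
$\mathbf{E}\bigl[\mathbf{1}_{B_S^t\ge P_S,\,B_D^t\ge P_D}\bigr]=\mathbf{E}\bigl[\mathbf{1}_{B_S^t\ge P_S}\bigr]\cdot\mathbf{E}\bigl[\mathbf{1}_{B_D^t\ge P_D}\bigr]$
for every $t$. The nontrivial step is to pass to the Cesàro limit, because in general the limit of a product of two time-averages need not factorize. I would handle this by promoting independence at the slot level to joint ergodicity: since $\{E_S^t\}$ and $\{E_D^t\}$ are independent stationary ergodic sequences, their joint process $\{(E_S^t,E_D^t)\}$ is stationary ergodic (the product of independent ergodic systems is ergodic), and hence so is the induced process $\{(B_S^t,B_D^t)\}$. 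Birkhoff's ergodic theorem applied jointly, together with the product form of the stationary law, then yields $\Psi(P_S,P_D)=\Psi_S\cdot\Psi_D$.

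The remainder is bookkeeping: substitute $\Psi_S=\min(1,\lambda_S/P_S)$ from Lemma \ref{lem:transprob} and $\Psi_D=\min(1,\lambda_D/P_D)$ from Lemma \ref{lem:recvprob}, and verify by a four-way case split on whether $\lambda_S/P_S$ and $\lambda_D/P_D$ exceed $1$ that
$\min(1,\lambda_S/P_S)\cdot\min(1,\lambda_D/P_D)=\min\!\left(1,\tfrac{\lambda_S}{P_S},\tfrac{\lambda_D}{P_D},\tfrac{\lambda_S\lambda_D}{P_SP_D}\right),$
which is the claimed expression.

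The main obstacle will be the boundary regime where $\lambda_S\ge P_S$ or $\lambda_D\ge P_D$, since then the associated battery process is unstable and does not admit a stationary distribution on a finite state space in the usual Markov-chain sense. In that case I would argue separately, analogously to the contradiction argument in the proof of Lemma \ref{lem:transprob}, that $\mathbf{E}[\mathbf{1}_{B_S^t\ge P_S}]\to 1$ as $t\to\infty$ because the battery drifts to infinity; the Cesàro limit of the product then still collapses to $1\cdot\Psi_D$ (or the symmetric $\Psi_S\cdot 1$, or $1$), so the product identity $\Psi=\Psi_S\Psi_D$ and hence the min formula continue to hold.
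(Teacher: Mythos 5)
Your proposal follows the same route as the paper: under the disjoint policy each node's battery trajectory is a functional of its own arrival sequence only, so independence of $\{E_S^t\}$ and $\{E_D^t\}$ gives $\Psi(P_S,P_D)=\Psi_S\Psi_D$, after which Lemmas \ref{lem:transprob} and \ref{lem:recvprob} and the identity $\min(1,x)\min(1,y)=\min(1,x,y,xy)$ finish the argument. The paper's proof is exactly this and nothing more; it asserts that the events $B_S^t\ge P_S$ and $B_D^t\ge P_D$ are independent and multiplies the two Cesàro limits. You go further in trying to justify why the Cesàro average of the product of the per-slot probabilities equals the product of the two Cesàro averages --- which is indeed the only non-trivial point, and one the paper leaves implicit --- but the tool you invoke there is false as stated: the product of two independent ergodic systems is \emph{not} ergodic in general (two independent copies of an irrational circle rotation give a non-ergodic product, since $x-y$ is an invariant function). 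So joint ergodicity of $\{(E_S^t,E_D^t)\}$ does not follow from ergodicity of the two factors plus independence. The step can be repaired: ergodicity of the product does hold when at least one factor is weakly mixing (in particular for i.i.d.\ arrivals, the setting of the paper's FSMC section), and in the boundary regime $\lambda_S\ge P_S$ or $\lambda_D\ge P_D$ your separate drift argument forces one marginal probability to tend to $1$, in which case the Cesàro limit of the product factorizes for elementary reasons. At the full stated generality (arbitrary independent stationary ergodic arrivals) you would need either to assume joint ergodicity of the pair or to argue the factorization of the time average by another means; that one sentence is the only place your proof is weaker than it claims to be, and, to be fair, the paper's own proof does not address the point at all.
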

\begin{proof}
Since $\left\{E_{S}^{t} \right\}$ and $\left\{E_{D}^{t}\right\}$ are mutually independent, and also because $S$ and $D$ do not know the BSI of each other, the events $B_S^t\ge P_S$ and $B_D^t \ge P_D$ are independent. We have
\begin{equation}
\Psi(P_{S}, P_{D}) = \Psi_S \Psi_D,
\end{equation}
and according to Lemma \ref{lem:transprob} and Lemma \ref{lem:recvprob}, we directly obtain
\begin{align}
\label{eq:indprob}
\Psi(P_{S}, P_{D}) & = \min\left(1, \frac{\lambda_S}{P_S}\right)\min\left(1, \frac{\lambda_D}{P_D}\right) \notag \\
& = \min \left( 1, \frac{\lambda_{S}}{P_{S}}, \frac{\lambda_{D}}{P_{D}}, \frac{\lambda_{S}\lambda_{D}}{P_{S} P_{D}} \right),
\end{align}
which completes the proof.
\end{proof}

When energy arrivals $\{E_{S}^{t} \}$ and $\{E_{D}^{t} \}$ are not independent, i.e., $\rho\neq0$\footnote{The energy arrival correlation $\rho$ can take negative values under some circumstances, while generally it is positive as for wind and solar energy.}, Theorem \ref{th:Psi} still holds if $\max (\Psi_S, \Psi_D) = 1$. In this case, Theorem \ref{th:Psi} becomes $\Psi(P_{S}, P_{D}) = \min \left(1, \frac{\lambda_{S}}{P_{S}}, \frac{\lambda_{D}}{ P_{D}} \right)$. Otherwise, if $\max (\Psi_S, \Psi_D) < 1$ which means that energy arrivals are insufficient at both nodes, it is expected that $\Psi(P_{S}, P_{D})$ increases with $\rho$, which will be confirmed by our numerical results. But this will highly depend on the energy arrival profiles at both nodes, and no closed-form $\Psi(P_{S}, P_{D})$ is available here.

Next we will show that the disjoint threshold-based policy is actually the \emph{optimal} power control policy under independent energy arrivals. First two lemmas are needed as follows.

\begin{lemma}
\label{lem:outagemin}
Consider a non-EH block fading channel with only CDI at $S$. Given the average power constraint $\mathbf{E}\{P_{S}^t\} \leq P_0$ at $S$, the optimal power control policy that minimizes the average \emph{channel} outage probability, is threshold-based, i.e.,
\begin{equation}\
P_{S}^t = \left\{
	\begin{aligned}
	&0, && \text{with probability $1-\frac{P_0}{P^*}$}, \\
	&P^*, && \text{with probability $\frac{P_0}{P^*}$},
	\end{aligned}
\right.
\end{equation}
where
\begin{equation}\
P^* =\left\{
	\begin{aligned}
	&P_0, && \quad P_0 \geq P_a, \\
	&P_a, && \quad P_0 < P_a,
	\end{aligned}
\right.
\end{equation}
where $P_a$ is the larger solution to the quadratic equation $(P-P_{C,S})^2 = (1+\alpha)(2^R-1)zP$.
\end{lemma}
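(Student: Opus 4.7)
The plan is to recast the lemma as a static optimization over the one-slot distribution of $P_{S}^{t}$: because the channel is i.i.d. and there is no battery dynamics (non-EH, only an average power constraint), the time-average outage equals $\mathbf{E}[p(P_{\text{tx}}^{t})]$ for the marginal of $P_{S}^{t}$. Equivalently, I would maximize $\mathbf{E}[f(P_{S}^{t})]$ subject to $\mathbf{E}[P_{S}^{t}]\le P_{0}$ and $P_{S}^{t}\in\{0\}\cup(P_{C,S},\infty)$, where
\[
f(P):=\exp\!\Bigl(-\tfrac{(1+\alpha)(2^{R}-1)z}{P-P_{C,S}}\Bigr)\ \text{for } P>P_{C,S},\qquad f(0):=0.
\]
Once in this form, the tool is the concave envelope: $\bar f$, the smallest concave majorant of $f$ on $[0,\infty)$. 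Jensen gives $\mathbf{E}[f(P_{S}^{t})]\le \mathbf{E}[\bar f(P_{S}^{t})]\le \bar f(\mathbf{E}[P_{S}^{t}])\le \bar f(P_{0})$ (the last step by monotonicity of $\bar f$), and I then exhibit a policy that meets every inequality with equality.

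Next I would characterize the shape of $f$. A direct computation of $f''$ yields
\[
f''(P)=\frac{(1+\alpha)(2^{R}-1)z\,e^{-c/(P-P_{C,S})}}{(P-P_{C,S})^{4}}\bigl[(1+\alpha)(2^{R}-1)z-2(P-P_{C,S})\bigr],
\]
so $f$ is convex on $(P_{C,S},P_{\text{inf}})$ and concave on $(P_{\text{inf}},\infty)$ for a unique inflection $P_{\text{inf}}$; together with $f\equiv0$ on $[0,P_{C,S}]$, this makes $f$ S-shaped. Hence $\bar f$ must equal $f$ on some ray $[P_{a},\infty)$ and be the straight segment from $(0,0)$ to $(P_{a},f(P_{a}))$ on $[0,P_{a}]$, where $P_{a}$ is the unique tangent-from-origin contact. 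The tangency condition $f(P_{a})=P_{a}f'(P_{a})$ lets the exponential factor cancel and reduces precisely to $(P-P_{C,S})^{2}=(1+\alpha)(2^{R}-1)z\,P$; the larger root is the relevant one, giving exactly the $P_{a}$ of the lemma.

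With $\bar f$ in hand, the conclusion is immediate. If $P_{0}\ge P_{a}$, then $\bar f(P_{0})=f(P_{0})$ and the deterministic choice $P_{S}^{t}=P_{0}$ attains it, yielding $P^{*}=P_{0}$ and transmission probability $1=P_{0}/P^{*}$. If $P_{0}<P_{a}$, then $\bar f(P_{0})=(P_{0}/P_{a})f(P_{a})$, and the two-point distribution placing mass $P_{0}/P_{a}$ on $P_{a}$ and $1-P_{0}/P_{a}$ on $0$ both attains the envelope (a chord of a concave function meets its endpoints with equality under Jensen) and uses exactly the average power $P_{0}$, giving $P^{*}=P_{a}$ and transmission probability $P_{0}/P_{a}=P_{0}/P^{*}$. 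A Lagrangian rederivation, maximizing $f(P)-\mu P$ pointwise with $\mu$ chosen so the maxima at $P=0$ and $P=P_{a}$ tie, gives the same structure and is a clean alternative write-up.

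The step I expect to be the main obstacle is the shape verification: making sure the tangent point $P_{a}$ (the larger root of the quadratic) actually lies strictly above the inflection $P_{\text{inf}}=P_{C,S}+\tfrac{1}{2}(1+\alpha)(2^{R}-1)z$, so that the line-then-curve description of $\bar f$ is correct and the concavity used in the final Jensen step is legitimate. This should follow by writing $P_{a}-P_{C,S}=\sqrt{(1+\alpha)(2^{R}-1)z\,P_{a}}$ from the quadratic and showing the right-hand side exceeds $\tfrac{1}{2}(1+\alpha)(2^{R}-1)z$, i.e.\ $P_{a}>\tfrac{1}{4}(1+\alpha)(2^{R}-1)z$, which the quadratic itself enforces once $P_{C,S}\ge 0$. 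Apart from this shape check and a brief justification that the excluded gap $(0,P_{C,S}]$ is never used by an optimal distribution (since $f$ vanishes there while positive power is charged against the budget), the remainder is routine.
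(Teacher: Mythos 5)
Your proposal is correct and follows essentially the same route as the paper's Appendix~A: both formulate the problem as optimizing over the one-slot power distribution, construct the concave majorant of $f$ (the line from the origin tangent to $f$ at $P_a$, with the tangency condition $f(P_a)=P_af'(P_a)$ collapsing to the stated quadratic), apply Jensen's inequality, and exhibit the two-point/deterministic distribution that attains the bound. The shape check you flag as the main obstacle ($P_a$ lying above the inflection point) is precisely the step the paper also notes, asserting $P_a>P_b$ with $P_b=P_{C,S}+\tfrac{1}{2}(1+\alpha)(2^R-1)z$, and your verification of it is sound.
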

\begin{proof}
See Appendix \ref{app:lem:outagemin}.
\end{proof}
We then immediately have the following lemma.
\begin{lemma}
\label{lem:outageminEH}
Consider a block fading channel with non-EH $D$ and EH $S$ with stationary and ergodic energy arrivals $\{E_{S}^{t}\}$, and $\mathbf{E} \left[E_{S}^{t} \right] = \lambda_{S}$. Assuming only CDI at $S$, the optimal power control policy that minimizes the average channel outage probability is threshold-based, the same as the corresponding non-EH link with average power consumption at $S$ as $\lambda_S$.
\end{lemma}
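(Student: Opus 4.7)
The plan is to sandwich the EH optimum between two things: a converse saying that no EH policy can beat the non-EH optimum with average power constraint $\lambda_S$ (covered by Lemma \ref{lem:outagemin}), and an achievability saying that the threshold policy of Lemma \ref{lem:outagemin} instantiated at $P_0=\lambda_S$ is feasible in the EH setting and achieves that bound. The machinery for the achievability side is Lemma \ref{lem:transprob}, which tells us the long-run fraction of slots in which the threshold policy actually transmits.

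For the converse, I would argue as follows. Any admissible EH policy must respect the causality constraint $P_S^t \le B_S^t$ for all $t$. Summing the battery recursion \eqref{eq:BSevo} over $t=1,\dots,n$, dividing by $n$, using $B_S^{n+1}\ge 0$, and taking $n\to\infty$, the stationarity and ergodicity of $\{E_S^t\}$ forces
\[
\limsup_{n\to\infty}\frac{1}{n}\sum_{t=1}^{n}\mathbf{E}[P_S^t]\;\le\;\lambda_S.
\]
Thus any EH policy is, in particular, a feasible policy for the non-EH problem of Lemma \ref{lem:outagemin} with $P_0=\lambda_S$, and so its time-average channel outage probability is at least the value achieved by the threshold policy of that lemma at $P^* = \max(\lambda_S, P_a)$.

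For the matching direction, I would take exactly the same threshold $P^*$ and show that the disjoint threshold-based policy $P_S^t=P^*\mathbf{1}_{B_S^t\ge P^*}$ achieves this bound. Since channel fading is independent of the battery, the resulting outage probability factors as
\[
p_{\text{out}}=1-(1-p(P^*_{\text{tx}}))\,\Psi_S,
\]
where $P^*_{\text{tx}}$ is the transmit power corresponding to $P^*$. By Lemma \ref{lem:transprob}, $\Psi_S=\min(1,\lambda_S/P^*)$, which equals $1$ when $\lambda_S\ge P_a$ (so $P^*=\lambda_S$) and equals $\lambda_S/P_a$ when $\lambda_S<P_a$ (so $P^*=P_a$). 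In both cases this coincides with the mixing probability $P_0/P^*$ in Lemma \ref{lem:outagemin}, and the per-attempt channel outage $p(P^*_{\text{tx}})$ is the same on both sides, so the lower bound is met.

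The main obstacle I anticipate is the converse step, specifically justifying that the long-run average energy consumption of an arbitrary, possibly history-dependent EH policy is bounded by $\lambda_S$ without any i.i.d.\ assumption on $\{E_S^t\}$; this is where the stationarity and ergodicity of the arrivals, together with the non-negativity of the battery, must be invoked carefully so that the non-EH benchmark of Lemma \ref{lem:outagemin} is a legitimate lower bound. Once that averaging argument is made precise, the achievability follows immediately from Lemma \ref{lem:transprob} and the independence of the fading process from the battery dynamics.
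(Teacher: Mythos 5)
Your proposal is correct and follows essentially the same sandwich argument as the paper: the non-EH optimum of Lemma \ref{lem:outagemin} with $P_0=\lambda_S$ is a lower bound because EH causality only adds constraints, and the threshold policy at $P^*=\max(\lambda_S,P_a)$ attains it since Lemma \ref{lem:transprob} gives $\Psi_S=\min(1,\lambda_S/P^*)$, matching the mixing probability $P_0/P^*$. The only difference is that you spell out the converse (summing the battery recursion to bound the long-run average power by $\lambda_S$) where the paper disposes of it in one sentence, which is a welcome tightening rather than a departure.
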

\begin{proof}
Let $P_a$ be the same notation as in Lemma \ref{lem:outagemin}. If $\lambda_S \geq P_a$, then $P^* = \lambda_S$. Otherwise, let the policy be transmitting whenever $B_S^t \geq P_a$, according to Lemma \ref{lem:transprob}, the transmission probability is $\Psi_S = \frac{\lambda_S}{P_a}$. This policy is the same as the optimal policy in Lemma \ref{lem:outagemin} by replacing $P_0$ with $\lambda_S$. As EH imposes more strict power allocation causality constraints, the performance of the EH link is no better than the non-EH link with the same average power consumption. Therefore, this policy is optimal for the EH link.
\end{proof}

The above lemma coincides with \cite[Proposition 3.5]{HuangCh14}, while we consider the circuit power $P_{C,S}$ of $S$, and we also provide a direct proof for infinity time horizon without taking the asymptotic to the solution of the finite time horizon case. Now we are ready to prove the optimality of the disjoint threshold-based policy.

\begin{theorem}
\label{th:optimality1}
When EH processes of $S$ and $D$ are mutually independent, and BSI is not shared. Assume no retransmission. Among all stationary power control policies, the disjoint threshold-based policy is \emph{optimal}, with optimal threshold $P_S^*$ as
\begin{equation}
    \label{eq:optimalPS}
	P_{S}^{*} = \max \left(\lambda_{S}, B_{\text{th}} \right),
	\end{equation}
where $B_{\text{th}}$ is given by
\[
B_{\text{th}} = \frac{1}{2} \left[ (2P_{C, S} + c) + c^{\frac{1}{2}}{(4 P_{C, S}+c)}^{\frac{1}{2}} \right],
\]
and $c = (2^R-1)(1+\alpha)z$.
\end{theorem}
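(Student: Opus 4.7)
The plan is to reduce the joint optimization over policies at $S$ and $D$ to the single-node EH problem already handled by Lemma \ref{lem:outageminEH}. First, because BSI is not shared and $\{E_S^t\}$ is independent of $\{E_D^t\}$, the battery process $\{B_D^t\}$ (and hence the event $\{B_D^t \ge P_D\}$) is independent of $\{B_S^t\}$ and of the channel gain $h$. Moreover, $D$'s best stationary rule is to listen whenever $B_D^t \ge P_D$: refusing in a feasible slot can only discard a possibly successful transmission while the detection cost is the same as in (\ref{PD}). Under this $D$-rule, Lemma \ref{lem:recvprob} gives $\Psi_D = \min(1,\lambda_D/P_D)$, which is \emph{constant} in the choice of $S$'s policy. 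Combining with (\ref{eq:phi}) and the mutual independence of the channel outage event, the transmit indicator, and the receive indicator, I would write
\[
\phi \;=\; \Psi_D\cdot\mathbf{E}\!\left[(1-p(P_{\text{tx}}^t))\,\mathbf{1}_{P_{\text{tx}}^t>0}\right],
\]
so that minimizing $p_{\text{out}}=1-\phi$ is equivalent to maximizing the second factor, which depends only on $S$.

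The second factor is exactly the long-run channel success probability for an EH transmitter with circuit power $P_{C,S}$, only CDI, and stationary ergodic arrivals of mean $\lambda_S$. Lemma \ref{lem:outageminEH} then says the optimum has the threshold structure of Lemma \ref{lem:outagemin} with $P_0=\lambda_S$. When $\lambda_S\ge P_a$, the non-EH optimum is to transmit at $P^*=\lambda_S$ with probability $1$, realized in the EH setting by the disjoint threshold $P_S=\lambda_S$, for which Lemma \ref{lem:transprob} gives $\Psi_S=1$. When $\lambda_S<P_a$, the non-EH optimum is to transmit at $P^*=P_a$ with probability $\lambda_S/P_a$, which by Lemma \ref{lem:transprob} is \emph{exactly} achieved by the disjoint threshold $P_S=P_a$. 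Both cases collapse to $P_S^{*}=\max(\lambda_S,P_a)$.

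To match $P_a$ with the closed form for $B_{\text{th}}$, set $c=(1+\alpha)(2^R-1)z$ and expand the defining quadratic $(P-P_{C,S})^2 = cP$ as $P^2-(2P_{C,S}+c)P+P_{C,S}^2=0$. Its discriminant is $c(c+4P_{C,S})$, and the larger root reproduces $B_{\text{th}}=\tfrac12\bigl[(2P_{C,S}+c)+c^{1/2}(4P_{C,S}+c)^{1/2}\bigr]$, giving $P_S^{*}=\max(\lambda_S,B_{\text{th}})$ as claimed.

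The step I expect to be the main obstacle is making the independence factorization in the displayed formula for $\phi$ watertight for \emph{every} stationary policy at $S$, not merely for threshold policies. Concretely, I need to argue that any stationary transmit rule using only $(B_S^t,\text{CDI})$ produces a joint law in which $\mathbf{1}_{P_{\text{tx}}^t>0}$, $\mathbf{1}_{B_D^t\ge P_D}$ and the channel gain $h^t$ are mutually independent, and that no deviation by $D$ from ``listen iff $B_D^t\ge P_D$'' can strictly help. Once this factorization and $D$-side optimality are pinned down, the reduction to Lemma \ref{lem:outageminEH} is immediate and the remaining algebra is routine.
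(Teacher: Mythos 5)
Your proposal is correct and follows essentially the same route as the paper's proof: factor out the constant $\Psi_D$ (which is fixed at $\min(1,\lambda_D/P_D)$ because $D$'s battery process is driven solely by $\{E_D^t\}$, independent of $S$'s actions), reduce to the single-EH-transmitter outage problem, and invoke Lemmas \ref{lem:outagemin} and \ref{lem:outageminEH} to conclude the threshold structure. The only cosmetic difference is that you read the optimal threshold $\max(\lambda_S,P_a)$ directly off the structure of $g^*(P)$ in the proof of Lemma \ref{lem:outagemin}, whereas the paper re-derives it via a first-order condition on $\phi(P_S)$ using the closed form of $\Psi(P_S,P_D)$ from Theorem \ref{th:Psi}; the two coincide because that first-order condition is exactly the tangent equation $(P-P_{C,S})^2=cP$ whose larger root defines both $P_a$ and $B_{\text{th}}$.
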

\begin{proof}
See Appendix \ref{app:th:optimality1}.
\end{proof}
The theorem also reveals that the optimal threshold does not depend on $\lambda_D$, and thus the optimality holds even when energy arrival statistics are not shared between $S$ and $D$.

\subsection{Joint Threshold-based Policy with K=1}

Under the joint threshold-based policy, $S$ and $D$ know whether the other node will transmit or not, and the battery state evolution is shown in (\ref{eq:battery_joint}). Under this policy, we have the following proposition for the probability $\Psi(P_{S}, P_{D})$.
\begin{proposition}
\label{prop:jointprob}
The probability $\Psi(P_{S}, P_{D})$ under the joint threshold-based policy is
\begin{equation}
\Psi(P_{S}, P_{D}) = \min \left( 1, \frac{\lambda_{S}}{P_{S}}, \frac{\lambda_{D}}{P_{D}} \right).
\end{equation}
\end{proposition}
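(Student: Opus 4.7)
The strategy is to lift the ``energy packet'' queue construction of Lemma \ref{lem:transprob} to a \emph{pair} of coupled queues, one at each node. I would build queues $\mathcal{Q}_S$ and $\mathcal{Q}_D$ with packet sizes $P_S$ and $P_D$; repeating the argument around (\ref{eq:Etevo})--(\ref{eq:lambda_eq}) separately at $S$ and $D$ gives packet arrival rates $\tilde\lambda_S = \lambda_S/P_S$ and $\tilde\lambda_D = \lambda_D/P_D$. The only departure from Lemma \ref{lem:transprob} is the service rule: because the joint policy acts only when $B_S^t \geq P_S$ and $B_D^t \geq P_D$ simultaneously, one packet departs from each queue per slot, and a departure occurs iff \emph{both} queues are non-empty. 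The identity $N_t^\beta = \lfloor B_\beta^t/P_\beta \rfloor$ from (\ref{eq:NtBS}) still applies at each node, so $\Psi(P_S,P_D)$ is exactly the long-run fraction of slots in which $\{N_t^S \geq 1,\ N_t^D \geq 1\}$.

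\textbf{Case analysis.} Without loss of generality take $\tilde\lambda_S \leq \tilde\lambda_D$. If $\tilde\lambda_S \geq 1$, both queues are fed at least as fast as the maximum possible service rate of one packet per slot, and the contradiction argument used in Lemma \ref{lem:transprob} (applied to either queue in isolation) forces each queue to be non-empty almost surely, yielding $\Psi = 1$. If $\tilde\lambda_S < 1$, the pivotal observation is that $\mathcal{Q}_S$ and $\mathcal{Q}_D$ share the \emph{same} per-slot departure process, so the long-run throughput of $\mathcal{Q}_D$ equals $\Psi(P_S,P_D)$, which is at most $\tilde\lambda_S$ (otherwise $\mathcal{Q}_S$ would be drained faster than it is filled). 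When $\tilde\lambda_D > \tilde\lambda_S$, this throughput is strictly below $\tilde\lambda_D$, so the backlog of $\mathcal{Q}_D$ drifts to infinity and $N_t^D \geq 1$ holds for all sufficiently large $t$ almost surely. Once $\mathcal{Q}_D$ is permanently busy, the joint service condition degenerates to $\{N_t^S \geq 1\}$, and Lemma \ref{lem:transprob} applied directly to $\mathcal{Q}_S$ gives $\Psi(P_S,P_D) = \tilde\lambda_S$.

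\textbf{Anticipated obstacle.} The borderline case $\tilde\lambda_S = \tilde\lambda_D < 1$ is where a pure drift argument is tight, since $\mathcal{Q}_D$'s arrival rate exactly matches its maximum possible service rate. I plan to dispatch it via a throughput balance: the ceiling $\Psi \leq \tilde\lambda_S$ (from $\mathcal{Q}_S$'s arrival rate) together with the floor $\Psi \geq \tilde\lambda_S$ (else $\mathcal{Q}_S$ would accumulate unboundedly, contradicting the throughput identity behind Lemma \ref{lem:transprob}) pin $\Psi = \tilde\lambda_S$. If that proves fragile under the stationary-ergodic hypothesis, I would instead perturb $\tilde\lambda_D$ slightly above $\tilde\lambda_S$ and pass to the limit by continuity. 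Assembling the cases yields $\Psi(P_S,P_D) = \min(1,\lambda_S/P_S,\lambda_D/P_D)$, which is the claim.
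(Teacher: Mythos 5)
Your proposal is correct and follows essentially the same route as the paper's own proof: the energy-causality ceiling $\Psi(P_S,P_D) \le \min\left(\frac{\lambda_S}{P_S},\frac{\lambda_D}{P_D}\right)$, reduction to Lemma \ref{lem:transprob} once the less-loaded node's backlog saturates and that node is eventually always ready, and a separate contradiction argument for the tie case $\frac{\lambda_S}{P_S}=\frac{\lambda_D}{P_D}<1$. Your explicit two-coupled-queue bookkeeping is just a more formal packaging of the paper's argument, and the continuity fallback for the borderline case is unnecessary since your primary sandwich argument is exactly the one the paper uses.
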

\begin{proof}
See Appendix \ref{app:prop:jointprob}.
\end{proof}
Using above Proposition, we have the following theorem.

\begin{theorem}
\label{th:optimality2}
Assuming no retransmission, and $S$ and $D$ know whether the other node will transmit or not. Among all stationary power control policies, the joint threshold-based policy is \emph{optimal}. The optimal threshold $P_S^*$ is
	\begin{equation}
\label{eq:optimalPS2}
	P_{S}^{*} = \max \left(\lambda_{S}, B_{\text{th}}, \frac{\lambda_SP_{D}}{\lambda_D} \right),
	\end{equation}
where $B_{\text{th}}$ is the same as in Theorem \ref{th:optimality1}.
\end{theorem}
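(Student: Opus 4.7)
The plan is to mirror the two-step argument used for Theorem~\ref{th:optimality1}: first characterize the outage-minimizing threshold inside the joint threshold-based family, and then show that no stationary policy outside this family can do strictly better. Plugging Proposition~\ref{prop:jointprob} into the objective gives
\[
\phi(P_S) \;=\; \exp\!\left(-\tfrac{c}{P_S-P_{C,S}}\right)\cdot \min\!\left(1,\tfrac{\lambda_S}{P_S},\tfrac{\lambda_D}{P_D}\right),
\]
with $c=(2^R-1)(1+\alpha)z$. The exponential factor is strictly increasing on $(P_{C,S},\infty)$ and the $\min$ is non-increasing in $P_S$, so any maximizer must push $P_S$ up to where the $\lambda_S/P_S$ branch becomes the active one, i.e.\ to $P_S \ge \max(\lambda_S,\lambda_S P_D/\lambda_D)$; beyond that point the objective collapses to $(\lambda_S/P_S)\exp(-c/(P_S-P_{C,S}))$, whose unconstrained maximizer is $B_{\text{th}}$ by exactly the calculation used inside the proof of Theorem~\ref{th:optimality1}. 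Taking the binding lower bound in each regime yields $P_S^* = \max(\lambda_S, B_{\text{th}}, \lambda_S P_D/\lambda_D)$.

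For the second step, I would extend Lemma~\ref{lem:outagemin} and Lemma~\ref{lem:outageminEH} to accommodate the receiver-side constraint. In any stationary joint policy the queueing argument behind Lemma~\ref{lem:transprob} forces $\mathbf{E}[P_S^t]\le \lambda_S$, and because $D$ consumes $P_D$ on precisely the slots when $S$ transmits, a parallel application of Lemma~\ref{lem:recvprob} forces $\Pr\{P_S^t>0\}\,P_D \le \lambda_D$. Exploiting the channel-battery independence already used in (\ref{eq:phi}), the success probability is upper-bounded by the optimum of the non-EH surrogate problem
\[
\max_{\pi}\int \exp\!\left(-\tfrac{c}{P-P_{C,S}}\right)\mathbf{1}_{P>P_{C,S}}\,d\pi(P)\;\;\text{s.t.}\;\int P\,d\pi \le \lambda_S,\;\pi(\{P>0\})\le\tfrac{\lambda_D}{P_D},
\]
over marginal distributions $\pi$ of the per-slot transmit power. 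A KKT argument with multipliers $\mu_1,\mu_2$ concentrates any optimum on the stationary points of $\exp(-c/(P-P_{C,S}))-\mu_1 P-\mu_2\mathbf{1}_{P>0}$ on $(P_{C,S},\infty)$; since $\mu_2\mathbf{1}_{P>0}$ is constant on the positive half-line, the structure of Lemma~\ref{lem:outagemin} carries over unchanged and delivers a two-point optimizer $\{0,P^*\}$, which is exactly a joint threshold-based policy with threshold $P^*$.

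The main obstacle is the two-constraint Lagrangian step and the case analysis that identifies $P^*$. The key observation is that the stationary point of the Lagrangian on $(P_{C,S},\infty)$ equals $B_{\text{th}}$ only when the $\mu_1$ (power-at-$S$) multiplier is uniquely active; when the transmit-probability constraint also binds, $(q,P^*)$ is pinned to the intersection of the two constraints so $P^* = \lambda_S P_D/\lambda_D$; and when the bound $q\le 1$ binds together with the power constraint, $P^*$ reduces to $\lambda_S$. Verifying that these three optimizers are precisely the three arguments of the outer $\max$ in $P_S^*$ is straightforward but bookkeeping-heavy. Once done, achievability follows immediately by substituting $P_S^*$ back into Proposition~\ref{prop:jointprob}, which shows the joint threshold-based policy meets the surrogate upper bound with equality.
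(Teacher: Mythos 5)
Your proposal is correct and arrives at the threshold formula by the same final optimization the paper performs (maximize $e^{-c/(P_S-P_{C,S})}\min(1,\lambda_S/P_S,\lambda_D/P_D)$, push $P_S$ up to where the $\lambda_S/P_S$ branch activates, then compare with $B_{\text{th}}$), but your converse step is organized differently. The paper splits on whether $\lambda_D\ge P_D$: if so, $D$ effectively never lacks energy and Lemma~\ref{lem:outageminEH} applies directly; if not, it fixes the receive probability $\Psi_D$, applies the concavification argument of Lemma~\ref{lem:outagemin} to the conditional power density $g(P)|_{\Psi_D}$ to obtain a two-point law on each slice, and then uses the argument of Proposition~\ref{prop:jointprob} to show the EH system attains the same $\Psi(P_S,P_D)$ as the non-EH surrogate. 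You instead pose a single surrogate program over the marginal power law with the two constraints $\int P\,d\pi\le\lambda_S$ and $\pi(\{P>0\})\le\lambda_D/P_D$ and resolve it with a two-multiplier Lagrangian. Both routes are sound: yours avoids the case split and makes the origin of the $\lambda_S P_D/\lambda_D$ term transparent (it is the regime where the receiver-side multiplier binds), at the price of the KKT bookkeeping you flag; the paper's slice-wise argument reuses Lemma~\ref{lem:outagemin} with no new optimization machinery but leaves the interaction of the two constraints implicit in the final one-dimensional search. One caveat if you write yours up: with $\mu_2>0$ the support point on $(P_{C,S},\infty)$ is still determined by $f'(P)=\mu_1$, but the value-matching condition that fixes it becomes tangency to the line $\mu_1 P+\mu_2$ rather than to a line through the origin, so the positive support point is no longer $B_{\text{th}}$ in that regime; the identification of the three arguments of the outer $\max$ therefore genuinely requires the complementary-slackness case analysis you sketch and does not follow from Lemma~\ref{lem:outagemin} ``unchanged.''
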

\begin{proof}
See Appendix \ref{app:th:optimality2}.
\end{proof}

Remark that from Proposition \ref{prop:jointprob} and Theorem \ref{th:optimality2}, the joint threshold-based policy is not affected by the correlation between $\{E_{S}^{t} \}$ and $\{E_{D}^{t} \}$, and its performance and the optimal transmission power threshold are the same \emph{regardless of} the value of $\rho$.

\subsection{Receiver Detection and Processing}
\label{sec:recv_detect}
This section will provide some extensions to the receiver power consumption model. We will first demonstrate a more practical detection structure, and then the energy for processing the received data will be included in the model.

As explained in Section \ref{sec:energymodel}, $D$ can spend $\xi$ portion of $P_D$ to detect whether $S$ is transmitting, rather than receiving for the whole slot. Note that when $\xi=1$, node $D$ has to receive for the whole packet to decide whether $S$ is transmitting or not, which corresponds to the model we discussed previously.
Under disjoint threshold-based policy, we have the following proposition with receiver detection.
\begin{proposition}
\label{prop:recvprob2}
Define the detection and receiving probability of $D$ as
\begin{equation}
\Psi_D = \lim_{n \to +\infty} \frac{1}{n} \sum_{t = 1}^{n} \mathbf{E} \left[ \mathbf{1}_{B_{D}^{t} \geq P_{D}} \right].
\end{equation}
Under the disjoint threshold-based policy, we have
\begin{equation}
\label{eq:probrecvdet}
\Psi_D = \min\left(1, \frac{\lambda_D}{[1-(1-\xi)(1-\Psi_S)]P_D}\right),
\end{equation}
where $\xi$ is the detection cost, and $\Psi_S = \min\left(1,\frac{\lambda_S}{P_S}\right)$.
\end{proposition}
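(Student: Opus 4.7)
The plan is to replay the energy-packet-queue construction from Lemma~\ref{lem:transprob} at the receiver side, with the single modification that the per-slot drain now depends on whether $S$ is transmitting: $P_D$ when $B_S^t \geq P_S$ and $\xi P_D$ when $B_S^t < P_S$.

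First, I would build a queue $\mathcal{Q}_D$ with packet size $P_D$ using the same ``packet source'' bookkeeping as in (\ref{eq:Etevo})--(\ref{eq:lambda_eq}), now driven by $\{E_D^t\}$. This gives packet arrival rate $\tilde\lambda_{\text{ep}} = \lambda_D/P_D$, and queue length $N_t = \lfloor B_D^t/P_D \rfloor$ so that $\{N_t \geq 1\} = \{B_D^t \geq P_D\}$. The service rule per busy slot removes a full packet when $B_S^t \geq P_S$ and only fraction $\xi$ of a packet otherwise. Because $\{B_S^t\}$ is an autonomous stationary ergodic process whose long-run transmission fraction is $\Psi_S$ by Lemma~\ref{lem:transprob}, the long-run average packets served per busy slot equals $\Psi_S + \xi(1-\Psi_S) = 1-(1-\xi)(1-\Psi_S)$.

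With this effective service rate in hand, the rest mirrors Lemma~\ref{lem:transprob}. In the overloaded regime $\tilde\lambda_{\text{ep}} \geq 1-(1-\xi)(1-\Psi_S)$, the same empty-to-empty counting argument (identifying the inter-empty arrivals with the slot counts and contradicting an almost-sure Ces\`aro gap) forces $\Psi_D = 1$; otherwise Little's law equates the server busy probability with the load, giving $\Psi_D = \lambda_D / \{[1-(1-\xi)(1-\Psi_S)]P_D\}$. Combining both regimes and substituting $\Psi_S = \min(1, \lambda_S/P_S)$ recovers (\ref{eq:probrecvdet}).

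The main obstacle is the identification of the \emph{conditional} long-run fraction of transmitting slots among $D$'s busy slots with the \emph{unconditional} $\Psi_S$. This is delicate because $\{B_D^t\}$ tracks the entire history of $S$'s transmissions through the consumption dynamics, so $B_S^t$ and $B_D^t$ are not independent in stationarity. I would sidestep instantaneous independence by working at the level of ergodic time averages: the sequence $\{\mathbf{1}_{B_S^t \geq P_S}\}$ is a stationary ergodic functional of the exogenous $\{E_S^\tau\}$ alone, so its almost-sure Ces\`aro limit is $\Psi_S$, and combining this with the stationary-regime energy balance $\lambda_D = \lim_{n\to\infty}\frac{1}{n}\sum_{t=1}^n P_D^t$ (applied to the decomposition $P_D^t = P_D \mathbf{1}_{B_S^t \geq P_S}\mathbf{1}_{B_D^t \geq P_D} + \xi P_D \mathbf{1}_{B_S^t < P_S}\mathbf{1}_{B_D^t \geq P_D}$) pins down the joint time-average $\lim \frac{1}{n}\sum_t \mathbf{1}_{B_S^t \geq P_S}\mathbf{1}_{B_D^t \geq P_D}$ and yields the stated closed form.
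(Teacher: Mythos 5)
Your construction is essentially the paper's own proof: it builds the same energy-packet queue at $D$ and applies the same overloaded-regime/Little's-law dichotomy, the only difference being bookkeeping --- you deflate the service rate to $1-(1-\xi)(1-\Psi_S)$ packets per busy slot, while the paper keeps unit service and instead returns $(1-\xi)P_D$ to the packet ``source'' with probability $1-\Psi_S$, inflating the arrival rate to $\lambda_D+(1-\xi)P_D(1-\Psi_S)\Psi_D$, so the two fixed-point equations coincide. One caution: the ``main obstacle'' you correctly identify --- that the fraction of $S$-transmitting slots among $D$'s busy slots equals the unconditional $\Psi_S$ --- is not actually closed by the energy balance you invoke, since $\lambda_D = P_D q_1 + \xi P_D q_2$ with $q_1+q_2=\Psi_D$ (where $q_1,q_2$ are the joint time-averages of $D$ busy with $S$ transmitting/silent) is one equation in two unknowns; the identification $q_1=\Psi_S\Psi_D$ is an extra decorrelation step that the paper likewise asserts rather than derives.
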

\begin{proof}
See Appendix \ref{app:prop:recvprob2}.
\end{proof}
From (\ref{eq:probrecvdet}), we can see that $\Psi_D \leq \frac{\lambda_D}{P_D}$, which is the receiving probability without detection, and this means that the detection increases the receiving probability. If $\Psi_S < 1$, this gain increases when $\xi$ decreases.
Note that now the events $\mathbf{1}_{B_S^t \geq P_S}$ and $\mathbf{1}_{B_D^t\geq P_D}$ are \emph{no longer independent}, since the energy consumption procedure at $D$ is affected by how energy is used at $S$. Therefore the probability $\Psi(P_S, P_D)$ can not be calculated by (\ref{eq:indprob}). But from simulations, as shown in Fig.~\ref{fig:psitest}, when $\Psi_S <1$ and $\Psi_D<1$, $\Psi(P_S, P_D)$ can be well approximated by $\Psi_S\Psi_D$ for detection only. As a result, following the same procedure as in the proof of Theorem \ref{th:optimality1}, one can still get the threshold $P_S$ that is close to optimal. For joint threshold-based policy, the detection is actually not necessary, or at the optimal operation point, the joint threshold-based policy is equivalent to the case with $\xi=0$. For the linear power levels policy, the optimal performance with receiver detection can be solved by FSMC decribed in the next section.

\begin{figure}[!t]
\centering
\includegraphics[width=3.4in]{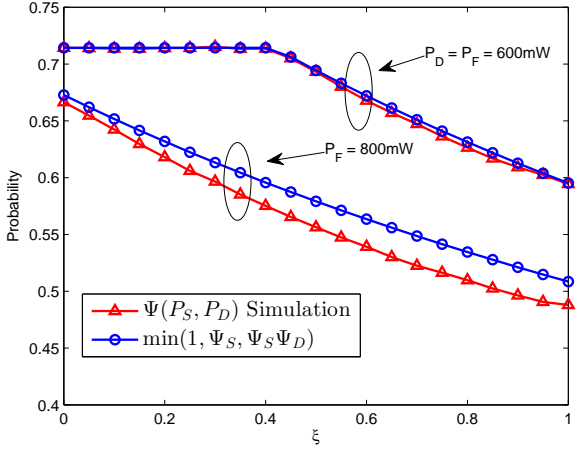}
\caption{Simulation results for $\Psi(P_S,P_D)$ with detection cost $\xi$ and processing, compared with $\Psi(1,\Psi_S,\Psi_S\Psi_D)$, for $P_S = 700$mW, $P_{D} = 600$mW, $z = 100$mW, $\rho = 0$, ${B^{\max} = +\infty}$, and $\lambda_S=\lambda_D = 500$mW. Here $\Psi_S = \frac{\lambda_S}{P_D}$ and $\Psi_D$ is according to (\ref{eq:probrecvdet}) for detection only with $P_D = P_F$, according to (\ref{eq:probrecvdetporoc}) for detection and processing with $P_F >P_D$, respectively.}\label{fig:psitest}
\end{figure}

Moreover, for example in sensor networks, $D$ may use more power to \emph{process} the successfully received data in addition to $P_D$. For instance, $D$ may need to compress and store the sensed data from $S$, or relay the data to the next hop, etc. If the precessing is performed immediately after the data is received, or concurrently with receiving \footnote{Basically it requires that the processing is completed within the same slot.}, we can denote the total power consumption as $P_F$, and correspondingly $P_D = \eta P_F$, where $\eta\in (0,1]$. In this way, at the beginning of a slot, the battery energy should satisfy $B_D^t \ge P_F$, because the battery should support $D$ to receive \emph{and} process the data, otherwise there is no need to receive. The power consumption at $D$ in slot $t$ is
\begin{equation}
P_{D}^{t} =
\left\{
	\begin{aligned}
	\xi \eta P_{F}, &\quad P_S^t = 0,\\
	\eta P_{F}, &\quad P_S^t > 0 \text {~and channel outage occurs}, \\
    P_{F}, & \quad \text{else}.
	\end{aligned}
\right. \label{PD-full}
\end{equation}
where the first case means that $S$ is not transmitting, and thus $D$ will switch off after consuming $\xi P_{D} = \xi \eta P_F$ for detection; In the second case when $S$ is transmitting, $D$ will keep receiving for the whole slot but the channel encounters outage, so that only the receiving power $P_{D} = \eta P_F$ is consumed and no further processing is needed; Finally in the third case, $D$ successfully receives the data and completes the processing.

With receiver detection and processing, the probability $\Psi_D$ is the probability that $B_{D}^{t} \geq P_{F}$. We have the following proposition for $\Psi_D$ with receiver detection and processing.
\begin{proposition}
\label{prop:recvprob3}
Define the detection and receiving probability of $D$ as
\begin{equation}
\Psi_D = \lim_{n \to +\infty} \frac{1}{n} \sum_{t = 1}^{n} \mathbf{E} \left[ \mathbf{1}_{B_{D}^{t} \geq P_{F}} \right].
\end{equation}
Under the disjoint threshold-based policy and with receiver detection and processing, we have
\begin{equation}
\label{eq:probrecvdetporoc}
\Psi_D = \min\left(1, \frac{\lambda_D}{[1\!-\!(1\!-\!\xi\eta)(1\!-\!\Psi_S)\!-\!(1\!-\!\eta)p(P_{\text{tx}})\Psi_S]P_F}\right),
\end{equation}
where $\xi$ and $\eta$ are the detection cost and the receiving fraction respectively, and $\Psi_S = \min \{1, \frac{\lambda_S}{P_S} \}$, and $p(P_{\text{tx}})$ is the channel outage probability defined in (\ref{eq:outage}), with $P_{\text{tx}} = \frac{P_S-P_{C,S}}{1+\alpha}$.
\end{proposition}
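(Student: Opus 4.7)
The plan is to mirror the energy-packet queue construction of Lemma \ref{lem:transprob} and Proposition \ref{prop:recvprob2}, but now with the per-slot drain at $D$ taking the three different values prescribed by (\ref{PD-full}) instead of two. The only new ingredient is that, in addition to the detection-versus-reception dichotomy, we must separately account for whether a received packet triggers the extra processing cost, which happens only on successful decoding.

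First, I would identify the average energy drained per slot at $D$ conditional on $D$ being active. Since $\{E_S^t\}$ and $\{E_D^t\}$ are independent and the channel fading is independent of both energy-arrival processes, in steady state the event $\{P_S^t>0\}$ has long-run frequency $\Psi_S=\min(1,\lambda_S/P_S)$ (by Lemma \ref{lem:transprob}), asymptotically independently of $\{B_D^t\geq P_F\}$, and conditional on a transmission the channel outage event has probability $p(P_{\text{tx}})$. Combining the three cases of (\ref{PD-full}) gives the expected drain per active slot
\begin{align*}
\bar{C} &= (1-\Psi_S)\,\xi\eta P_F + \Psi_S\, p(P_{\text{tx}})\,\eta P_F + \Psi_S\,[1-p(P_{\text{tx}})]\,P_F \\
        &= P_F\bigl[\,1-(1-\xi\eta)(1-\Psi_S)-(1-\eta)\,p(P_{\text{tx}})\,\Psi_S\,\bigr],
\end{align*}
after elementary rearrangement; this is exactly the denominator appearing in (\ref{eq:probrecvdetporoc}).

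Second, I would set up a fluid energy queue $\mathcal{Q}_D$ at $D$, analogous to the queue in the proof of Lemma \ref{lem:transprob}, with arrival rate $\lambda_D$ and per-slot service rate equal to $\bar{C}$ on average when $D$ is active --- the randomness in the actual drain being absorbed by working with a continuous (fluid) battery description rather than counting integer packets of fixed size. Stationarity and ergodicity of $\{E_D^t\}$ and of the induced Markov chain $(B_S^t,B_D^t)$ under infinite capacity imply that the time-average drain converges almost surely to $\bar{C}$. Then the same two-case dichotomy used in Lemma \ref{lem:transprob} --- a contradiction argument when $\bar{C}\leq\lambda_D$ (forcing $\Psi_D=1$) and Little's Law when $\bar{C}>\lambda_D$ (giving utilization $\lambda_D/\bar{C}$) --- yields $\Psi_D=\min(1,\lambda_D/\bar{C})$, i.e.\ (\ref{eq:probrecvdetporoc}).

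The step I expect to be the main obstacle is porting the integer-packet construction of Lemma \ref{lem:transprob} to a setting where the per-active-slot drain takes three distinct values $\xi\eta P_F$, $\eta P_F$ and $P_F$. I would resolve this by passing to a fluid formulation and invoking the ergodic theorem to replace the instantaneous drain by its conditional mean $\bar{C}$, or equivalently by a coupling argument that constructs a comparison system with deterministic per-active-slot drain $\bar{C}$ and shows the two systems share the same stationary active probability. A secondary subtlety is the factorization underlying $\bar{C}$: one must verify that $\{B_S^t\geq P_S\}$, $\{B_D^t\geq P_F\}$ and the slot-$t$ channel outage event are asymptotically mutually independent in steady state, which follows from the mutual independence of $\{E_S^t\}$, $\{E_D^t\}$ and the fading process, together with the fact that each node's policy depends only on its own battery.
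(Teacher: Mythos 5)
Your proposal is correct and follows essentially the same route as the paper: both reduce the claim to the energy-packet-queue construction of Lemma \ref{lem:transprob} and Proposition \ref{prop:recvprob2}, apply Little's Law in the energy-deficient regime and a saturation argument otherwise, and rest on the same (implicitly assumed) asymptotic independence of $\{B_S^t\ge P_S\}$, $\{B_D^t\ge P_F\}$ and the fading, which you at least flag explicitly. The only difference is bookkeeping: you compute the mean drain per active slot $\bar{C}$ directly and avoid a fixed point, whereas the paper keeps fixed energy packets of size $P_F$ and credits the unused fractions $(1-\xi\eta)P_F$ and $(1-\eta)P_F$ back to the ``source,'' yielding a linear fixed-point equation in $\Psi_D$ that solves to the same expression.
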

\begin{proof}
See Appendix \ref{app:prop:recvprob3}.
\end{proof}
Since the dependence between the events $\mathbf{1}_{B_S^t \geq P_S}$ and $\mathbf{1}_{B_D^t\geq P_D}$ is introduced due to the receiver detection and processing, the probability $\Psi(P_S, P_D)$ can not be calculated by (\ref{eq:indprob}). As shown in Fig.~\ref{fig:psitest}, we can still approximate $\Psi(P_S, P_D)$ with (\ref{eq:indprob}), but the precision becomes worse when we introduce processing, i.e., $P_F > P_D$.

\subsection{Policy of $S$ Known at $D$}
\label{sec:DknowS}
In previous subsections, $D$ has no knowledge about whether $S$ is taking the threshold-based policy or not. But for example, if $D$ knows that $S$ is adopting the threshold-based policy with threshold $P_S$, since $D$ has CSI, when the channel can not meet the transmit rate $R$, $D$ can choose not to receive. In this way, when $B_D^t \ge P_D$,
\begin{equation}
P_{D}^{t} =
\left\{
	\begin{aligned}
	0, &\quad |h|^{2} < \frac{(2^{R}-1)z}{P_{\text{tx}}} ,\\
    P_{D}, & \quad \text{else},
	\end{aligned}
\right. \label{PD-fullpolicyknownatD}
\end{equation}
where $P_{\text{tx}} = \frac{P_S-P_{C,S}}{1+\alpha}$. We have the following results for disjoint/joint threshold-based policies.

\emph{For the disjoint policy}, the following corollary of Theorem 1 holds for the probability $\Psi(P_{S}, P_{D})$.
\begin{corollary}
\label{cor:Psi2}
Assume EH processes of $S$ and $D$ are mutually independent, and BSI is not shared, but $D$ knows that $S$ is taking the threshold-based policy with threshold $P_S$. We have $\Psi(P_{S}, P_{D})$ under the disjoint threshold-based policy given by
\begin{align}
&\Psi(P_{S}, P_{D}) = \notag \\
& \quad \min\left(1, \frac{\lambda_{S}}{P_{S}}, \frac{\lambda_{D}}{P_{D}[1\!-\! p(P_{\text{tx}})]}, \frac{\lambda_{S}\lambda_{D}}{P_{S}P_{D}[1\!-\! p(P_{\text{tx}})]}\right),
\end{align}
where $p(P_{\text{tx}})$ is the channel outage probability defined in (\ref{eq:outage}), with $P_{\text{tx}} = \frac{P_S-P_{C,S}}{1+\alpha}$.
\end{corollary}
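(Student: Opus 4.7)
The plan is to adapt the energy-packet queue argument from Lemmas \ref{lem:transprob} and \ref{lem:recvprob}, with the sole essential change being the effective service rate of the queue at $D$, and then invoke independence to obtain the product form in the claimed four-way minimum. Since $S$ has no CSI and still follows the same disjoint threshold-based policy as before (it is unaware of what $D$ does or knows), its behavior is identical to Section \ref{sec:disjoint}, and Lemma \ref{lem:transprob} applies verbatim to give $\Psi_S=\min(1,\lambda_S/P_S)$.

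Next I would construct the receiver-side ``energy packet'' queue $\mathcal{Q}_D$ exactly as in Lemma \ref{lem:recvprob}: packets of size $P_D$ are generated from the source at the end of each slot, and the same computation as in (\ref{eq:lambda_eq}) gives arrival rate $\tilde{\lambda}_{\text{ep}}=\lambda_D/P_D$. The only modification is at the server: from (\ref{PD-fullpolicyknownatD}), because $D$ now uses CSI and knows the threshold policy of $S$, a packet is consumed in slot $t$ if and only if the queue is non-empty and $|h_t|^2\ge (2^R-1)z/P_{\text{tx}}$, where $P_{\text{tx}}=(P_S-P_{C,S})/(1+\alpha)$. As $\{h_t\}$ is i.i.d.\ Rayleigh and independent of $\{E_D^t\}$, the effective service rate is $\tilde{\mu}_{\text{ep}}=1-p(P_{\text{tx}})$. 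Little's Law then gives $\Psi_D=\tilde{\lambda}_{\text{ep}}/\tilde{\mu}_{\text{ep}}=\lambda_D/[P_D(1-p(P_{\text{tx}}))]$ whenever the load is below one, and the contradiction argument of Lemma \ref{lem:transprob} (now applied to the jointly stationary and ergodic pair $(n_t,\mathbf{1}_{|h_t|^2\ge(2^R-1)z/P_{\text{tx}}})$) covers the saturated case, yielding $\Psi_D=\min\bigl(1,\lambda_D/[P_D(1-p(P_{\text{tx}}))]\bigr)$.

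Finally I would verify that $\{B_S^t\ge P_S\}$ and $\{B_D^t\ge P_D\}$ remain independent under this modified receiver. Because $S$'s policy uses only its own BSI, $B_S^t$ is a measurable function of $\{E_S^\tau\}_{\tau<t}$ only; $B_D^t$ is a measurable function of $\{E_D^\tau,h_\tau\}_{\tau<t}$. Under the assumed mutual independence of $\{E_S^t\}$, $\{E_D^t\}$ and $\{h_t\}$, the two indicators are independent, so $\Psi(P_S,P_D)=\Psi_S\Psi_D$, which on expansion of $\min(1,a)\min(1,b)=\min(1,a,b,ab)$ gives the claimed formula. The main obstacle is the middle step: justifying that the channel-thinned service of $\mathcal{Q}_D$ still admits a Little's Law / ergodic analysis with service rate $1-p(P_{\text{tx}})$, and in particular that the empty-to-empty excursion argument in Lemma \ref{lem:transprob} carries over when service attempts are randomized by an independent Bernoulli sequence rather than being deterministic.
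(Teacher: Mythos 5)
Your proposal is correct and follows essentially the same route as the paper's proof: the paper also keeps $\Psi_S$ from Lemma \ref{lem:transprob}, reuses the energy-packet queue of Proposition \ref{prop:recvprob2} for $D$, and closes with the independence of $\{B_S^t\ge P_S\}$ and $\{B_D^t\ge P_D\}$. The only (immaterial) difference is bookkeeping: the paper accounts for channel outage by returning $P_D$ to the ``source'' with probability $p(P_{\text{tx}})$, giving the fixed point $\Psi_D=\bigl(\lambda_D+p(P_{\text{tx}})P_D\Psi_D\bigr)/P_D$, whereas you thin the service rate to $1-p(P_{\text{tx}})$ — both yield the same rate-conservation equation, and the saturated-case excursion argument you flag does carry over by ergodicity of the joint process.
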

\begin{proof}
See Appendix \ref{app:cor:Psi2}.
\end{proof}

Comparing Corollary \ref{cor:Psi2} and Theorem \ref{th:Psi}, we can see that knowing the policy of $S$ at $D$ improves the receiving probability so that the outage performance is improved. We can further prove the optimality of disjoint threshold-based policy as the following corollary of Theorem \ref{th:optimality1}.
\begin{corollary}
\label{cor:optimality1}
Assume EH processes of $S$ and $D$ are mutually independent, and BSI is not shared, but $D$ knows the policy of $S$. Assuming no retransmission, among all stationary power control policies, the disjoint threshold-based policy is \emph{optimal}. The optimal threshold $P_S^*$ is the same as (\ref{eq:optimalPS}) given in Theorem \ref{th:optimality1}.
\end{corollary}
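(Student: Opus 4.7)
The plan is to mirror the proof of Theorem \ref{th:optimality1}, replacing the expression for $\Psi(P_S, P_D)$ everywhere by the tighter bound from Corollary \ref{cor:Psi2}. First I would plug the four-term minimum from Corollary \ref{cor:Psi2} into $\phi(P_S) = [1-p(P_{\text{tx}})]\,\Psi(P_S,P_D)$ and factor the result as
\[
\phi(P_S) \;=\; \min\!\left(1,\tfrac{\lambda_S}{P_S}\right)\cdot\min\!\left([1-p(P_{\text{tx}})],\tfrac{\lambda_D}{P_D}\right),
\]
noting that the $[1-p(P_{\text{tx}})]$ factor cancels the two denominators that carry it in the Corollary and leaves two ``clean'' $\lambda_D/P_D$ terms. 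This exhibits $\phi$ as a product of a transmitter-side factor $\min(1,\lambda_S/P_S)$ identical to that of Theorem \ref{th:optimality1} and a receiver-side factor that is only larger than $[1-p(P_{\text{tx}})]\min(1,\lambda_D/P_D)$, so $D$'s extra knowledge can only improve the objective relative to Theorem \ref{th:optimality1}.

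Next I would upper-bound the per-slot success probability of an arbitrary stationary policy by $\phi(P_S)$ for a suitable effective threshold. The argument at $S$ is unchanged from Theorem \ref{th:optimality1}: Lemma \ref{lem:outageminEH} reduces an arbitrary transmit-power distribution to a two-point threshold policy without increasing average consumption. At $D$, the channel-aware activation rule makes the per-slot consumption rate $P_D[1-p(P_{\text{tx}})]$, and the queueing argument behind Lemma \ref{lem:recvprob} then yields $\Psi_D=\min(1,\lambda_D/(P_D[1-p(P_{\text{tx}})]))$, exactly the quantity appearing in Corollary \ref{cor:Psi2}. Independence of the EH streams and the absence of BSI sharing still give the product form $\Psi=\Psi_S\Psi_D$ needed here.

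To pin down the optimal threshold, I would analyze the two regions of $P_S$ separately. On $[P_{C,S},\lambda_S]$ the first factor is $1$ and the second is non-decreasing in $P_S$ (since $1-p(P_{\text{tx}})$ is increasing in $P_S$), so the maximum on this region is attained at $P_S=\lambda_S$. On $(\lambda_S,+\infty)$ the objective becomes $(\lambda_S/P_S)\min([1-p(P_{\text{tx}})],\lambda_D/P_D)$, whose transmitter-side maximizer comes from zeroing the derivative of $\lambda_S[1-p(P_{\text{tx}})]/P_S$; this yields exactly the quadratic $(P-P_{C,S})^2=cP$ from Lemma \ref{lem:outagemin} and hence the same $B_{\text{th}}$ as in Theorem \ref{th:optimality1}. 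Comparing the two regions then gives $P_S^{*}=\max(\lambda_S,B_{\text{th}})$, matching \eqref{eq:optimalPS}.

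The step I expect to be the main obstacle is the upper bound over all stationary policies. Once the receiver's activation depends on the channel realization, the events $\{B_S^t\ge P_S\}$ and ``$D$ is active'' are no longer decoupled in the same way as in Theorem \ref{th:optimality1}, so the Little's-law bookkeeping must be redone with the effective consumption rate $P_D[1-p(P_{\text{tx}})]$ and a careful exchange-of-randomization argument must rule out non-threshold policies that might exploit correlation between $B_S^t$ and the instantaneous channel. Once that bound is established, the threshold calculus in the previous paragraph is routine.
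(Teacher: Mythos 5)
Your factorization of $\phi$ via Corollary \ref{cor:Psi2} is correct, but the paper's own proof does not proceed this way, and the difference matters. The paper writes the channel success probability of an arbitrary policy $g(P)$ as $p_{\text{suc}}=\int_0^{+\infty} g(P)f(P)\,\ud P$, re-derives the receiver side as $\Psi_D=\min\bigl(1,\lambda_D/(p_{\text{suc}}P_D)\bigr)$, and obtains the identity $\phi=p_{\text{suc}}\Psi_D=\min\bigl(p_{\text{suc}},\lambda_D/P_D\bigr)$. Since this is non-decreasing in $p_{\text{suc}}$, maximizing $\phi$ is \emph{equivalent} to maximizing $p_{\text{suc}}$, which is exactly the problem already solved by Lemmas \ref{lem:outagemin} and \ref{lem:outageminEH}; both the optimality over all stationary policies and the value of $P_S^*$ then follow with no new calculus. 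That reduction is precisely what closes the step you flag as ``the main obstacle'': you acknowledge that ruling out non-threshold policies requires a new exchange-of-randomization argument once $D$'s activation is channel-dependent, but you do not supply one, so that step remains open in your write-up.

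The second, more concrete gap is in your threshold calculus on $(\lambda_S,+\infty)$. There your objective is $\min\bigl(\lambda_S[1-p(P_{\text{tx}})]/P_S,\ \lambda_S\lambda_D/(P_SP_D)\bigr)$, and zeroing the derivative of the first branch only locates the maximizer if that branch is the active one at $B_{\text{th}}$. When $\lambda_D<P_D$ and $1-p(P_{\text{tx}}(B_{\text{th}}))>\lambda_D/P_D$, the minimum switches to the second (strictly decreasing) branch below $B_{\text{th}}$, and the maximizer of your objective is the crossover point where $1-p(P_{\text{tx}})=\lambda_D/P_D$, not $B_{\text{th}}$. (Take $P_{C,S}=0$, $c=1$, $\lambda_S=0.6$, $\lambda_D/P_D=0.3$: then $B_{\text{th}}=1$ gives $\phi=0.18$, while $P_S\approx0.83$ gives $\phi\approx0.217$.) So $P_S^*=\max(\lambda_S,B_{\text{th}})$ does not follow from the function you wrote down. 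This also exposes a tension worth noting: your objective (built from Corollary \ref{cor:Psi2}) and the paper's objective $\min(p_{\text{suc}},\lambda_D/P_D)$ are not the same function --- they differ by a factor $\Psi_S$ in the receiver-limited branch --- and only the paper's version yields the stated threshold unconditionally. To complete your route you must either restrict attention to the regime where the $[1-p(P_{\text{tx}})]$ branch binds at the candidate optimum, or reconcile the two receiver-side formulas before doing the calculus.
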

\begin{proof}
See Appendix \ref{app:cor:optimality1}.
\end{proof}

The above argument indicates that, the policy information available at $D$ improves the outage performance, but it has no impact on the optimal policy at $S$, when the BSI is not shared. But the conclusion does not hold when the BSI is shared.

\emph{For the joint policy}, $S$ knows whether $D$ will receive or not, and thus when $D$ knows its policy, which means that the BSI is shared and $S$ also gets the CSI indirectly. As a result, there are full BSI and CSI sharing between $S$ and $D$. We have the following corollary for $\Psi(P_{S}, P_{D})$, and the proof is simply based on the proof of Corollary \ref{cor:Psi2} and Proposition \ref{prop:jointprob}.
\begin{corollary}
\label{cor:jointprob}
Assume $D$ knows that $S$ is taking the threshold-based policy with threshold $P_S$, the probability $\Psi(P_{S}, P_{D})$ under the joint threshold-based policy is
\begin{equation}
\Psi(P_{S}, P_{D}) = \min \left( 1, \frac{\lambda_{S}}{[1- p(P_{\text{tx}})]P_{S}}, \frac{\lambda_{D}}{[1- p(P_{\text{tx}})]P_{D}} \right),
\end{equation}
where $p(P_{\text{tx}})$ is the channel outage probability defined in (\ref{eq:outage}), with $P_{\text{tx}} = \frac{P_S-P_{C,S}}{1+\alpha}$.
\end{corollary}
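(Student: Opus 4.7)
The plan is to combine the two mechanisms already analyzed: the coupling of the two energy queues from the proof of Proposition \ref{prop:jointprob}, and the channel-aware filtering of reception attempts from the proof of Corollary \ref{cor:Psi2}. The starting observation is that under the joint threshold-based policy with $D$ knowing $S$'s policy, $S$ can infer $D$'s receiving decision through the shared joint coordination, so $S$ effectively has CSI as well. Consequently, each side spends its respective $P_S$ or $P_D$ in slot $t$ if and only if all three conditions hold: $B_S^t \ge P_S$, $B_D^t \ge P_D$, and $|h|^2 \ge (2^R-1)z/P_{\text{tx}}$. The threshold-based policy thus operates only on good-channel slots, so each ``successful attempt'' in the joint queueing model inherits an extra independent Bernoulli factor $1-p(P_{\text{tx}})$.

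The first step is to construct the two energy-packet queues at $S$ and $D$ exactly as in Lemma \ref{lem:transprob}, with packet arrival rates $\lambda_S/P_S$ and $\lambda_D/P_D$, respectively. The service rule becomes coupled: a packet is removed from either queue in slot $t$ iff both queues are simultaneously non-empty \emph{and} the channel is not in outage. Since Rayleigh fading is i.i.d.\ across slots and independent of the energy arrivals, the channel-good event contributes an independent Bernoulli factor $1-p(P_{\text{tx}})$ on top of the joint non-empty event.

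The second step is to replay the bottleneck dichotomy from the proof of Proposition \ref{prop:jointprob}. Call $\beta \in \{S,D\}$ the bottleneck if $\lambda_\beta/(P_\beta[1-p(P_{\text{tx}})])$ is smaller. The non-bottleneck queue receives packets faster than the maximum attainable effective service rate, so it grows without bound and is non-empty almost surely for all sufficiently large $t$, giving its marginal busy probability $\Psi_\beta = 1$. The bottleneck queue is then served, whenever non-empty, with the effective per-slot probability $1-p(P_{\text{tx}})$, and by Little's Law its busy fraction equals its load $\lambda_\beta/(P_\beta[1-p(P_{\text{tx}})])$. Since $\Psi(P_S,P_D)$ is the fraction of slots in which \emph{both} queues are simultaneously non-empty, and the non-bottleneck is almost always non-empty, this equals the bottleneck's busy fraction. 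Adding the trivial bound $\Psi(P_S,P_D) \le 1$ for the regime where neither queue is the bottleneck yields the claimed minimum over three terms.

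The main obstacle is checking that the Bernoulli service filtering slots neatly into the existing Proposition \ref{prop:jointprob} argument without disturbing its stationarity and ergodicity prerequisites. This amounts to verifying that viewing the combined system as a coupled queue with random (Bernoulli) server activation simply rescales the bottleneck's load by $1/(1-p(P_{\text{tx}}))$, and that independent thinning of the service process does not affect the ``non-bottleneck queue is eventually non-empty'' conclusion. Both points follow from the same ergodic averaging already used in the cited proofs, so no new machinery is required.
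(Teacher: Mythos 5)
Your proposal is correct and follows essentially the same route the paper intends: it explicitly combines the bottleneck dichotomy from the proof of Proposition \ref{prop:jointprob} with the channel-outage energy accounting from the proof of Corollary \ref{cor:Psi2}, which is exactly what the paper's one-line justification points to. The only cosmetic difference is that you place the factor $1-p(P_{\text{tx}})$ on the service side (Bernoulli thinning of the server) whereas the paper's Corollary \ref{cor:Psi2} argument places it on the arrival side (energy returned to the ``source''); both yield the same effective load $\lambda_\beta/\bigl([1-p(P_{\text{tx}})]P_\beta\bigr)$ and hence the same conclusion.
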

Based on this corollary, we can easily get the optimal threshold $P_S^*$ for the joint policy when the CSI and BSI are shared.
\begin{corollary}
\label{cor:optimalPSjoint}
When $D$ knows that $S$ is taking the threshold-based policy with threshold $P_S$, the optimal threshold $P_S^*$ of the joint threshold-based policy satisfies
\begin{equation}
\label{eq:optimalPSjoint}
	\exp \left[ - \frac{(2^{R}-1)(1+\alpha)z}{P_S^*-P_{C,S}} \right] = \frac{\lambda_S}{P_S^*},
\end{equation}
where $P_S^* \ge P_{C,S}$.
\end{corollary}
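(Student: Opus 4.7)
The plan is to combine Corollary \ref{cor:jointprob} with the objective $\phi(P_S)=[1-p(P_{\text{tx}})]\Psi(P_S,P_D)$ from problem (P1), and then exploit a monotonicity argument analogous to the ones used to derive the thresholds in Theorems \ref{th:optimality1} and \ref{th:optimality2}. Substituting the expression from Corollary \ref{cor:jointprob} into $\phi(P_S)$ collapses the factor $[1-p(P_{\text{tx}})]$ inside the $\min$, yielding
\begin{equation*}
\phi(P_S) = \min\!\left(\,1-p(P_{\text{tx}}),\ \frac{\lambda_S}{P_S},\ \frac{\lambda_D}{P_D}\right),
\end{equation*}
with $P_{\text{tx}}=(P_S-P_{C,S})/(1+\alpha)$, so that the problem reduces to choosing $P_S \ge P_{C,S}$ to maximise this minimum.

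Next I would isolate the $P_S$-dependent terms. Let $f(P_S)=1-p(P_{\text{tx}})=\exp[-(2^R-1)(1+\alpha)z/(P_S-P_{C,S})]$ and $g(P_S)=\lambda_S/P_S$. On $(P_{C,S},\infty)$, $f$ is strictly increasing from $0$ to $1$, while $g$ is strictly decreasing from $+\infty$ to $0$, so they cross at a unique point, which I will call $P_S^\star$. For $P_S<P_S^\star$, $f(P_S)<g(P_S)$ is the binding term among the first two, and increasing $P_S$ strictly increases $f$; for $P_S>P_S^\star$, $g(P_S)<f(P_S)$ is the binding term, and decreasing $P_S$ strictly increases $g$. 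Hence $\min(f,g)$ is uniquely maximised at $P_S^\star$, which is precisely the solution of $f(P_S^*)=g(P_S^*)$, giving the stated equation (\ref{eq:optimalPSjoint}).

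Finally I would observe that the third term $\lambda_D/P_D$ in $\phi(P_S)$ is independent of $P_S$, and thus merely caps the achievable $\phi$. When $\lambda_D/P_D$ is no smaller than the common value $f(P_S^\star)=g(P_S^\star)$, the cap is inactive and $P_S^\star$ defined by (\ref{eq:optimalPSjoint}) is globally optimal; when it is smaller, any $P_S$ for which both $f(P_S)\ge\lambda_D/P_D$ and $g(P_S)\ge\lambda_D/P_D$ achieves the cap, and the boundary between the $f$-binding and $g$-binding regimes is still characterised by (\ref{eq:optimalPSjoint}), so the corollary's equation consistently identifies the critical threshold. The main obstacle is mainly bookkeeping: one must verify that the transcendental equation has a unique solution with $P_S^*\ge P_{C,S}$ and that the corner behaviour near $P_S=P_{C,S}$ (where $f\to 0$) and $P_S\to\infty$ (where $g\to 0$) rules out boundary optima; both follow directly from the monotonicity argument above.
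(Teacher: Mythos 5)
Your proposal is correct and follows essentially the same route the paper intends: substitute Corollary \ref{cor:jointprob} into $\phi(P_S)=[1-p(P_{\text{tx}})]\Psi(P_S,P_D)$ so the objective collapses to $\min\bigl(1-p(P_{\text{tx}}),\,\lambda_S/P_S,\,\lambda_D/P_D\bigr)$, and locate the unique crossing of the increasing term $1-p(P_{\text{tx}})$ with the decreasing term $\lambda_S/P_S$ (the paper leaves this as an ``easily obtained'' consequence and gives no separate appendix proof). The only nit is that on $(P_{C,S},\infty)$ the term $\lambda_S/P_S$ decreases from $\lambda_S/P_{C,S}$ rather than from $+\infty$, which does not affect the existence and uniqueness of the crossing since $1-p(P_{\text{tx}})\to 0$ at that endpoint.
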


Remark that when both CSI and BSI are shared, the threshold-based policy is no longer optimal, while the truncated channel inversion \cite{Goldsmith05} can be proved optimal, which in our case has no closed-form expression, and is not presented in detail for the conciseness of this paper.

\emph{As for the receiver detection and processing}, first note that in Proposition \ref{prop:recvprob3}, $D$ does \emph{not} know the policy of $S$. If $D$ knows the policy of $S$, Proposition \ref{prop:recvprob3} has the following corollary, of which the proof is simply based on the proof of Corollary \ref{cor:Psi2} and Proposition \ref{prop:recvprob3}.
\begin{corollary}
\label{cor:recvprob3}
Under the disjoint threshold-based policy with receiver detection and processing, if $D$ knows the policy of $S$, we have
\begin{equation}
\label{eq:probrecvdetporoc2}
\Psi_D = \min\left(1, \frac{\lambda_D}{[1-(1-\xi\eta)(1-\Psi_S)][1-p(P_{\text{tx}})]P_F}\right),
\end{equation}
where $\xi$ and $\eta$ are the detection cost and the receiving fraction respectively, and $\Psi_S = \min \{1, \frac{\lambda_S}{P_S} \}$, and $p(P_{\text{tx}})$ is the channel outage probability defined in (\ref{eq:outage}), with $P_{\text{tx}} = \frac{P_S-P_{C,S}}{1+\alpha}$.
\end{corollary}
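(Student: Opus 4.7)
The plan is to take the per-slot energy bookkeeping that drives the proof of Proposition~\ref{prop:recvprob3} and insert into it the CSI-triggered skip rule from equation~(\ref{PD-fullpolicyknownatD}) that was used to prove Corollary~\ref{cor:Psi2}. Because $D$ now has CSI \emph{and} knows that $S$ is running the threshold rule with threshold $P_S$, at the start of every slot $D$ can compute $P_{\text{tx}}=(P_S-P_{C,S})/(1+\alpha)$ and check whether the current fade satisfies $|h|^2<(2^R-1)z/P_{\text{tx}}$. If this channel-outage event occurs, $D$ spends no energy at all, since even a transmission by $S$ would fail; otherwise, $D$ falls back on the detection-then-processing protocol of Proposition~\ref{prop:recvprob3}.

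Under this refinement I would enumerate three mutually exclusive outcomes per slot, conditioned on $B_D^t\ge P_F$: (i) channel outage, with consumption $0$; (ii) no channel outage and $S$ idle, with consumption $\xi\eta P_F$ for detection only; (iii) no channel outage and $S$ transmitting, with consumption $P_F$ for receive-plus-process. The intermediate ``$\eta P_F$'' case of Proposition~\ref{prop:recvprob3}, namely ``receive but do not process because of channel outage,'' disappears entirely, since $D$ has already skipped whenever the channel is in outage. Under the disjoint threshold policy $\mathbf{1}_{B_S^t\ge P_S}$ is a function only of $\{E_S^s\}_{s<t}$, and the block-Rayleigh fade $h$ is independent of the harvests, so the channel and transmission events factor; with $\Psi_S=\min(1,\lambda_S/P_S)$ the expected per-slot draw collapses to
\[
\bigl[(1-p(P_{\text{tx}}))(1-\Psi_S)\xi\eta+(1-p(P_{\text{tx}}))\Psi_S\bigr]P_F
=\bigl[1-(1-\xi\eta)(1-\Psi_S)\bigr]\bigl[1-p(P_{\text{tx}})\bigr]P_F.
\]

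From here I would feed this effective packet size into the energy-packet queue used in Lemma~\ref{lem:recvprob} and Proposition~\ref{prop:recvprob3}: the stationary, ergodic arrivals $\{E_D^t\}$ drive a queue whose ``server'' removes one effective packet per slot whenever it is non-empty, so the server-busy probability equals $\Psi_D$. Little's law in the stable regime, together with the contradiction argument of Lemma~\ref{lem:transprob} in the saturated regime, then gives $\Psi_D=\min(1,\lambda_D/\text{(effective size)})$, which is exactly the stated expression.

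The one subtle step is the factorization of the channel-outage event with the $S$-transmits event, which is where the word ``disjoint'' matters: a joint policy would couple $S$'s action to $B_D^t$ and thereby entangle $S$'s transmission indicator with $D$'s harvests. In the disjoint setting, however, $S$'s decision depends only on $B_S^t$, and the block-Rayleigh fade is independent of all harvests, so the factorization is immediate and the substitution into the Proposition~\ref{prop:recvprob3} machinery is routine.
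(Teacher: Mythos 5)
Your proof is correct and takes essentially the same route as the paper, which gives no separate argument for this corollary but states that it follows by combining the proof of Corollary~\ref{cor:Psi2} (the CSI-based skip when the channel is in outage, so $D$ spends nothing in that event) with the energy-packet-queue argument of Proposition~\ref{prop:recvprob3}. Your three-case expected per-slot draw is just the explicit form of the paper's ``refund to the source'' bookkeeping, and the resulting effective consumption $\bigl[1-(1-\xi\eta)(1-\Psi_S)\bigr]\bigl[1-p(P_{\text{tx}})\bigr]P_F$ reproduces (\ref{eq:probrecvdetporoc2}) exactly.
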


\subsection{$K>1$ and Linear Power Levels Policy}
\label{subsec:linear}
When retransmission is allowed, i.e., $K>1$, the analysis becomes much more complicated. Mainly because the transmission probability $\Psi$ will be correlated among adjacent slots due to the existence of battery. Even if we constrain the threshold-based policy to be \emph{invariant} among retransmission states, the optimal threshold may be different from our previously derived $P_S^*$. Nevertheless, the derived $P_S^*$ is near-optimal with $K>1$. To illustrate this, consider an equivalent non-EH fading channel with average power constraint $\lambda_S$ and $\lambda_D$ for $S$ and $D$, respectively. Following Lemma \ref{lem:outagemin}, the optimal policy to minimize the per-slot outage is threshold-based. Like the disjoint policy, $D$ has no choice but to receive with probability $\min(1, \frac{\lambda_D}{P_D})$ if it does not know the action at $S$. For this fading channel, threshold $P_S^*$  in (\ref{eq:optimalPS}) is the optimal threshold policy among all policies being \emph{invariant} over retransmissions. The outage probability is then
\begin{equation}
\label{eq:outagefinal}
p_{\text{out}}^{\text{fading}}(P_S^*) = (1-\phi(P_S^*))^K,
\end{equation}
where $\phi(P_S^*)$ is the same as the definition in (\ref{eq:P1}), and the above equation holds since the successful transmissions among adjacent slots are independent for non-EH fading channel. This serves as the lower bound of the outage probability for the counterpart EH link. Therefore, using the same threshold $P_S^*$, the outage probability of the EH link satisfy
\begin{equation}
p_{\text{out}}(P_S^*) \geq p_{\text{out}}^{\text{fading}}(P_S^*) = (1-\phi(P_S^*))^K.
\end{equation}
For joint threshold-based policy, the same conclusion holds. To show that the derived $P_S^*$ is close to optimal, we simulate the outage probability as shown in Fig.~\ref{fig:outagelower}, from which we can see that $p_{\text{out}}(P_S^*)$ is very close to its lower bound $p_{\text{out}}^{\text{fading}}(P_S^*)$, for both joint and disjoint policies.

\begin{figure}[!t]
\centering
\includegraphics[width=3.4in]{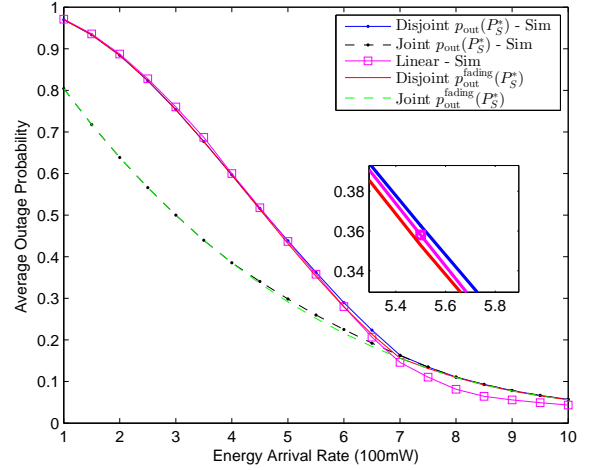}
\caption{Average outage probability versus energy arrival rate for $R = 2$bit/Hz/s, $P_{D} = 700$mW, $z=100$mW, $\rho = 0$, $K=4$, ${B^{\max} = +\infty}$, and other parameters according to Section \ref{sec:num}. Under this setting, $B_{\text{th}} = 787$mW, and $P_S^*$ is determined by (\ref{eq:optimalPS}) and (\ref{eq:optimalPS2}) for disjoint and joint policies, respectively.}
\label{fig:outagelower}
\end{figure}

Different from the other two, the linear power levels policy considers the retransmission state $u$. Given the intervals of the sequence $\Delta$ and the transmission limit $K$, we need to find the optimal starting power level $P_{S,\text{lp}}^{*}$. Since closed-form $p_{\text{out}}$ is not available, by now we can only heuristically choose the starting power. For the counterpart non-EH fading channel, if linear power levels policy is applied, to keep the average power consumption the same, the starting power $P_{S,\text{lp}} \leq P_{S}^{*}$. Using this heuristic, in Fig. \ref{fig:outagelower}, we set $P_{S,\text{lp}} = 0.8P_S^*$ to the EH case, and set $\Delta=100$mW for $\lambda_S<700$mW, and $\Delta=200$mW for $\lambda_S>700$mW. We observe that linear policy outperforms disjoint policy under medium to high energy arrival rates, and its gain increases when energy becomes more sufficient\footnote{Its performance can be better than the non-EH fading channel outage lower bound of joint/disoint policies as these policies are retransmission-invariant.}. This phenomena will also be discussed in Section \ref{sec:num} where we can use FSMC to look for the optimal $P_{S,\text{lp}}^{*}$.

\section{Finite Battery Capacity}
In this section,  we analyze the performance and parameter optimization for the three policies where both nodes are equipped with finite batteries with capacity $B_{\max} < +\infty$. We implement our model using discrete-time FSMC, in which we quantize all energies with respect to a energy fraction $E$ \footnote{We use the unit Joule in our discretized model with the unit slot length, where the energy and power are quantized by the same value $E$ joule.}. Accordingly, the battery levels are considered to be integers multiple of $E$. Here we consider temporal independent Bernoulli distributed energy arrivals at $S$ and $D$, but similar analysis can be extended to other types of distributions, which will be discussed at the end of Section \ref{subsec:FSMC}.

\subsection{FSMC Formulation}
\label{subsec:FSMC}
The discrete-time FSMC has the state space $\mathcal{S} \triangleq \mathcal{B}_{S} \times \mathcal{B}_{D} \times \mathcal{U}$, where $\mathcal{B}_{\beta} \triangleq \{0, 1, \cdots, B_{\max} \}$ ($\beta \in \{S, D\}$) is the set of battery states of node $\beta$. Recall that $\mathcal{U} = \{-1, 0, 1, \cdots, K-1 \}$ is the set of possible packet transmission attempt states. Thus, the state of the link at time $t$ is denoted by $s^{t} \triangleq (b_{S}^{t}, b_{D}^{t}, u^{t})$ where $b_{\beta}^{t} \in \mathcal{B}_{\beta}$ and $u^{t} \in \mathcal{U}$.

For the energy arrival processes, a Bernoulli model $\{E_{S}^{t}, E_{D}^{t} \}$ is considered. At the end of slot $t$, $E_{S}^{\max}$ and $E_{D}^{\max}$ levels of energy are injected into the transmitter and the receiver with probability $\mu_{S}$ and $\mu_{D}$, respectively. With probability $(1 - \mu_{S})$ and $(1 - \mu_{D})$, no energy is harvested. For example, with $\mu_{S} = \mu_{D} = \frac{1}{2}$, their mean values are $\lambda_S = \mathbf{E} [E_{S}^{t}] = \frac{1}{2} E_{S}^{\max}$ and $\lambda_D = \mathbf{E} [E_{D}^{t}] = \frac{1}{2} E_{D}^{\max}$. Similar model is also used in \cite{Aprem13} for wireless sensor networks (WSN). We also consider the correlation between the EH processes of $S$ and $D$ with correlation coefficient $\rho$.

Therefore, the harvested energy pair in slot $t$ can be written as
\begin{equation}
\label{eq:mu}
(E_{S}^{t}, E_{D}^{t}) = \left\{
	\begin{aligned}
	& (0, 0), & ~ \text{with probability } \mu_{0}, \\
	& (0, E_{D}^{\max}), & ~ \text{with probability } \mu_{1}, \\
	& (E_{S}^{\max}, 0), & ~ \text{with probability } \mu_{2}, \\
	& (E_{S}^{\max}, E_{D}^{\max}), & ~ \text{with probability } \mu_{3},
	\end{aligned}
\right.
\end{equation}
where $t = 0, 1, 2, \cdots$. Each of the probabilities $\mu_{i} (i = 0, 1, 2, 3)$ can be calculated depending on $\mu_S$, $\mu_D$ and $\rho$ with the following constraints: (1) $\sum_{i=0}^{i=3} \mu_{i} = 1$ and (2) $\mu_{1} = \mu_{2}$, where the second constraint comes from our setting of changing $E_{\beta}^{\max}$ to control $\lambda_{\beta}$, $\beta \in \{S,D\}$.

Denote the state transition probability matrix as $\mathcal{T}$, whose elements represent the probability of a transition from state $s^{t} \triangleq (m, n, v)$ to state $s^{t+1} \triangleq (i, j, u)$, i.e.,
\begin{align}
\mathcal{T}_{m,n,v}^{i,j,u} = \text{Pr} \left\{ B_{S}^{t+1} = i, B_{D}^{t+1} = j, u^{t+1} = u | \right. \quad\quad \notag \\
 \left.  B_{S}^{t} = m, B_{D}^{t} = n, u^{t} = v \right\},
\label{TransitionMatrix}
\end{align}
where $m, i \in \mathcal{B}_{S}$, $n, j \in \mathcal{B}_{D}$ and $v, u \in \mathcal{U}$. Let $\delta(s^{t}, a^{t}(u), s^{t+1})$ denote the Kronecker delta function\footnote{Without ambiguity with the Dirac delta function used in Appendix \ref{app:lem:outagemin}, as here three parameters are used.} from state $s^{t}$ to $s^{t+1}$ with an action $a^{t}(u) = [P_{S}^{t}(u), P_{D}^{t}]$ taken in slot $t$, with retransmission state $u$, and
\[
\delta(s^{t}, a^{t}(u), s^{t+1}) =  \left\{
	\begin{aligned}
	& 1, & ~ \text{$s^{t}$ transits to $s^{t+1}$ with $a^{t}(u)$}, \\
	& 0, & ~ \text{else},
	\end{aligned}
\right.
\]
For the ease of exposition, in the following we drop the node indexes $S$ and $D$ to consider one of the two nodes, e.g., we use $\mu$ to denote the EH probability at each slot. For $\forall v$ and $u = -1$, $\mathcal{T}_{s^{t}}^{s^{t+1}} = \mu \left(1 - p(P_{\rm tx}) \right) \delta(s^{t}, a^{t}(u), s^{t+1})$ if a transmission successes. For $v = K-1$ and $u = 0$, or $v = -1$ and $u = 1$, or $v \ne -1, K-1$ and $u = v+1$, $\mathcal{T}_{s^{t}}^{s^{t+1}} = \mu p(P_{\rm tx}) \delta \left(s^{t}, a^{t}(u), s^{t+1} \right)$ if a transmission fails because an outage occurs. For $v = K-1$ and $u = 0$, or $v = -1$ and $u = 1$, or $v \ne -1, K-1$ and $u = v+1$, $\mathcal{T}_{s^{t}}^{s^{t+1}} = \mu \delta \left(s^{t}, a^{t}(u), s^{t+1} \right)$ if a transmission fails due to insufficient energy states. Note in above explanations, the probability $\Psi(P_S^t,P_D^t)$ is reflected in $ \delta(s^{t}, a^{t}(u), s^{t+1})$, where the action $a^{t}(u) = \left[P_{S}^{t}(u), P_{D}^{t} \right]$ varies according to different power control policies as proposed previously. Specifically, for disjoint/joint threshold-based policy, $P_S^t$ does not depend on $u$. For $D$, when the receiver detection is considered, $P_D^t$ is decided by (\ref{PD-detect}). When $D$ knows the policy of $S$,  $P_D^t$ is decided by (\ref{PD-fullpolicyknownatD}), and $P_D^t$ takes zero value with probability $p(P_{\text{tx}})$ and this must coincide with the calculation of $\mathcal{T}_{s^{t}}^{s^{t+1}}$. When both detection and processing are considered, $P_D^t$ is decided by (\ref{PD-full}) with channel outage probability $p(P_{\text{tx}})$, which should also coincide with the calculation of $\mathcal{T}_{s^{t}}^{s^{t+1}}$. Note that $\xi = 1$ and $\eta=1$ represent the cases without receiver detection or processing.

The stationary probabilities $\pi(s)$ of this FSMC can be obtained by solving the balance equation\footnote{Under mild conditions on the transmission power policy, we can show that the FSMC is irreducible and positive recurrent, which is not detailed here.},
\begin{equation}
\pi(s) = \sum_{s} \text{Pr} \left\{s^{t+1} = (i, j, u) | s^{t} = (m, n, v) \right\} \pi(s).
\label{Pi}
\end{equation}
Under the ACK-based retransmission mechanism, an outage occurs if and only if a packet is still not transmitted successfully even after $K-1$ times of retransmission. We focus on the following problem of minimizing the average outage probability
\begin{equation}
\min_{P_{S} \geq P_{C,S}} p_{\text{out}},
\end{equation}
where the optimization variable $P_S$ is the power threshold for the disjoint/joint threshold-based policies and is the power levels threshold for the first attempt in the linear power policy, and the step $\Delta$ is given. The average outage probability $p_{\text{out}}$ is decided in the following proposition.

\begin{proposition}
The average outage probability is given by
	\begin{equation}
	p_{\text{out}} = \frac{\pi_{u=0}(s)}{\pi_{u=0}(s) + \pi_{u = -1}(s)},
	\label{OP}
	\end{equation}
where $\pi(s)$ is the stationary probability that the link is in state $s = (b_{S}, b_{D}, u)$, and $\pi_{u=k}(s)$ is defined as
\[
\pi_{u=k}(s) = \sum_{b_{S},b_{D}}\pi(b_{S}, b_{D}, u=k).
\]
\end{proposition}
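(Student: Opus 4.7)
The plan is to interpret $p_{\text{out}}$ as the long-run fraction of packets that end in outage, and then express that ratio through the stationary distribution of the FSMC using the ergodic theorem. The key observation is that the retransmission-state variable $u$ defined in (\ref{ACK}) already encodes, at the start of every slot, whether a new packet is beginning and, if so, how the previous packet terminated: $u=-1$ means the previous packet was ACK-ed and a fresh packet is starting, $u=0$ means the previous packet exhausted its $K$ attempts without success and a fresh packet is starting, while $u\in\{1,\dots,K-1\}$ means we are in the middle of a retransmission round. Hence the slots in which a fresh packet begins are exactly those with $u\in\{-1,0\}$, and among those, the ones where the \emph{previous} packet was in outage are exactly those with $u=0$.

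I would then invoke the ergodic theorem for the FSMC. Under the mild conditions that make the chain irreducible and positive recurrent (noted in the footnote accompanying (\ref{Pi})), the time average of any indicator on states converges almost surely to the corresponding stationary probability. In particular, the long-run fractions of slots with $u=-1$ and with $u=0$ are $\pi_{u=-1}(s)$ and $\pi_{u=0}(s)$ respectively, where the sum over $(b_S,b_D)$ is taken as in the statement.

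Next I would translate these slot-frequencies into packet-frequencies. Over a horizon of $T$ slots, let $N_T$ be the number of packets begun and $O_T$ the number of packets that ended in outage. By the identification above,
\begin{equation}
N_T = \sum_{t=1}^{T}\mathbf{1}_{u^t\in\{-1,0\}} + O(1), \qquad O_T = \sum_{t=1}^{T}\mathbf{1}_{u^t=0} + O(1),
\end{equation}
where the $O(1)$ boundary terms account for the packet in progress at time $T$ and the initialization. Dividing by $T$, letting $T\to\infty$, and applying the ergodic theorem,
\begin{equation}
\lim_{T\to\infty}\frac{N_T}{T} = \pi_{u=-1}(s)+\pi_{u=0}(s), \qquad \lim_{T\to\infty}\frac{O_T}{T}=\pi_{u=0}(s).
\end{equation}
Taking the ratio,
\begin{equation}
p_{\text{out}} = \lim_{T\to\infty}\frac{O_T}{N_T} = \frac{\pi_{u=0}(s)}{\pi_{u=-1}(s)+\pi_{u=0}(s)},
\end{equation}
which is (\ref{OP}).

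There is no real technical obstacle; the only subtlety is the bookkeeping step that matches the events ``packet $i$ terminates in outage'' and ``the next slot carries $u=0$'', which is immediate from definition (\ref{ACK}). I would just make sure to comment that the boundary terms are uniformly bounded so they vanish after dividing by $T$, and that the irreducibility/positive-recurrence invoked here is exactly what makes the $\pi(s)$ from (\ref{Pi}) well defined and equal to the relevant time averages.
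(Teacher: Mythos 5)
Your proposal is correct and follows essentially the same route as the paper: identify the slots with $u=0$ and $u=-1$ as the terminations of outage and successful packets respectively, convert the stationary probabilities into long-run packet counts, and take the ratio. The only difference is that you make the ergodic-theorem invocation and the $O(1)$ boundary bookkeeping explicit, which the paper leaves implicit.
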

\begin{proof}
Notice that $\pi_{u=0}(s)$ represents the probability that a transmission outage occurs in the previous slot, and $\pi_{u=-1}(s)$ represents the probability that a successful transmission is finished in the previous slot. Otherwise, in the states $s$ where $u \ne -1, 0$, the packet will be under retransmission in the current slot. Assuming we have $N$ slots and $N \to +\infty$ to reach the stationary distribution, then the number of packets transmitted successfully is $N_{\text{success}} = N \pi_{u = -1}(s)$, and the number of  packets transmitted with an outage is $N_{\text{outage}} = N \pi_{u = 0}(s)$. Thus, the average outage probability is given by the ratio of the packets that are not transmitted successfully even after $K-1$ times of retransmission:
	\begin{equation}
	p_{\text{out}} = \frac{N_{\text{outage}}}{N_{\text{outage}} + N_{\text{success}}},
	\end{equation}
and we get the proposition.
\end{proof}

Similarly, the average transmission times $\tau$ is given by a weighed function of the stationary probability $\pi(s)$. First define,
\[
p_{u,v} = \sum_{(i,j)(m,n)} \text{Pr} \left\{s^{t+1} = (i, j, u) | s^{t} = (m, n, v) \right\},
\]
and then
\begin{equation}
\tau = 1+ \frac{\sum_{k = 1}^{K-1} k p_{-1,k} \pi_{v=k}(s)}{\pi_{u=-1}(s)}.
\label{ReTimes}
\end{equation}

We can now summarize the procedure for computing the outage probability $p_{\text{out}}$ and the average transmission times $\tau$:

\begin{enumerate}
\item Discretize the battery levels and the power consumption levels to generate the FSMC with state space $\mathcal{S}$,
\item Compute $\mathcal{T}$, where $\mathcal{T}$ is the state transition probability matrix with entries defined in (\ref{TransitionMatrix}),
\item Obtain the stationary probabilities $\pi(s)$ by solving (\ref{Pi}),
\item Obtain $P_{\text{out}}$ and $\tau$ from (\ref{OP}) and (\ref{ReTimes}), respectively.

\end{enumerate}

\begin{remark}
In the infinite battery capacity case where ${B_{\max} = +\infty}$, it is obtained that if $\lambda_{S} \geq P_{S}$ and $\lambda_{D} \geq P_{D}$, the energy is always enough, and the battery states will have the behaviors ${B_{\beta} = +\infty}$ ($\beta \in \{S, D\}$). Thus the state space will become a simpler set denoted as $\mathcal{S}_{U} = \mathcal{U} = \{-1, 0, 1, \cdots K-1 \}$. The corresponding transition matrix $\mathcal{T}_{U}$ is of dimension $(K + 1)$ and the stationary distribution can be easily obtained as
\begin{equation}
\pi_{U} = \frac{1-p}{1-p^{K}} [1, p^{K}, p, p^{2}, \cdots, p^{K-1}],
\label{FSMC_Remark}
\end{equation}
where $p = p(P_{\rm tx})$. Through (\ref{FSMC_Remark}), $p_{\text{out}} = p^{K}$ (which means $K$ times of transmission fail continuously) where $p$ is obtained through (4). However, for the general value of $B_{\max}$, it is hard to get the closed-form of the stationary distribution $\pi(s)$.
\end{remark}

\begin{remark}
Although our analytical results in Section \ref{sec:infinte} hold for general stationary and \emph{ergodic} EH process, we restrict our FSMC analysis to \emph{i.i.d.} EH sequences. Because the main focus is on the correlation between the EH processes of $S$ and $D$. According to the results in Section \ref{sec:infinte}, the temporal correlation of the EH process will not have notable impact on the outage performance when the battery capacity is large. Nevertheless, temporal correlation may have severe impact on the outage performance with small battery. In fact, our FSMC formulation can be extended to tackle the temporal correlated EH process. Specifically, the current system state $\mathcal{S}$ has to be expanded to be five-dimensional, including $E_S^t$ and $E_D^t$ as additional state attributes. This however imposes high calculation complexity, which should be addressed in future studies.
\end{remark}

\subsection{Local Searching Algorithm}
For the disjoint threshold-based policy and the joint threshold-based policy requires carefully choosing the optimal threshold $P_{S}^{*} \in \mathcal{B}_{S}$. For the linear power levels policy, one also needs to find the optimal starting point $P_{S,\text{lp}}^{*} = P_{S}({-1}) = P_{S}({0})$, for given $K$ and the interval $\Delta$ \footnote{In the algorithm, note that $E$ represents the discretized energy unit.}. Since it is not possible to directly get the closed-form solution of $p_{\text{out}}$, a one-dimension local searching algorithm shown in Algorithm 1 is implemented to find the optimal $P_{S}^{*}$ and the corresponding $P_{\rm tx}^{*}$. This algorithm requires searching all the states of the energy storage at the transmitter and is of complexity $\mathcal{O}(B_{\max}+1)$. We also have the transition matrix $\mathcal{T} \in \mathbb{R}^{(B_{\max}+1)^{2} \times (K+1)}$, and the detailed $\mathcal{T}$ will be different according to different power control policies.

\begin{algorithm}[t]
\caption{One Dimensional Searching Algorithm}
\begin{algorithmic}[1]
\STATE {\bf Initialize} $P_{S}^{*} = 0$ and $p_{\text{out}} = 1$;
\FOR{$P_{S} = P_{C,S}, P_{S} \in \mathcal{B}_{S}$}
\STATE Calculate the transition matrix $\mathcal{T}$ via (\ref{TransitionMatrix}) and determine the stationary distribution $\pi$ via (\ref{Pi});
\STATE Obtain the outage probability $\hat{p}_{\text{out}}$ via (\ref{OP});
\IF{$\hat{p}_{\text{out}} \leq p_{\text{out}}$}
\STATE $p_{\text{out}} \leftarrow \hat{p}_{\text{out}}$;
\STATE $P_{S}^{*} \leftarrow P_{S}$;
\STATE $P_{\rm tx}^{*} = \frac{P_{S}^{*} - P_{C, S}}{1 + \alpha}$;
\ENDIF
\STATE $P_{S} = P_{S} + E$;
\ENDFOR
\end{algorithmic}
\end{algorithm}

\section{Numerical Results}
\label{sec:num}
In this section, numerical results are presented to show the performance of the proposed power control policies. To use FSMC, the state space is discretized with the granularity of $E = 50$mW, and thus the energy and power are of the same quantization granularity. We consider temporal independent Bernoulli energy arrival with parameters $E_{\max} = 20 E$ and $\lambda_{S}=\lambda_D=500$mW$= 10E$, and $P_{C, S} = 100$mW$= 2E$, $P_{D} = 700$mW$=14E$ corresponding to the coded system \cite{Cui05}\cite{Grover11}. The noise power $z=100$mW. We also set $P_F=P_D$, i.e., $\eta=1$. Given the correlation coefficient $\rho$, we can obtain $\mu_{i} (i = 0, 1, 2, 3)$ from (\ref{eq:mu}). Parameter $\alpha = 1$ corresponding to the drain efficiency of $0.5$ for a typical class-AB amplifier \cite{Berglund06}, and thus $P_{S}^{t} = 2P_{\text{tx}}^{t} + P_{C, S}$. For the retransmission mechanism, unless otherwise mentioned, we allow $K = 4$ times of transmission ($3$ times of retransmission), and thus the retransmission state set $\mathcal{U} = \{-1, 0, 1, 2, 3 \}$. In all sets of results, the battery capacity is finite, so that the optimal power control policy parameters, i.e., $P_S^*$ for joint/disjoint threshold-based policy, $P_{S,\text{lp}}^{*}$ for linear power levels policy (with $\Delta = 100$mW) are decided by the FSMC-based local search in Algorithm 1. Remark that the linear policy under evaluation in this section is with $\xi=1$. To evaluate the gain provided by knowing the policy of $S$ at $D$, we provide corresponding results for joint threshold-based policies. In other policies, this setting is default, i.e., $D$ does not know the policy of $S$.

\begin{figure}[!t]
\centering
\includegraphics[width=3.8in]{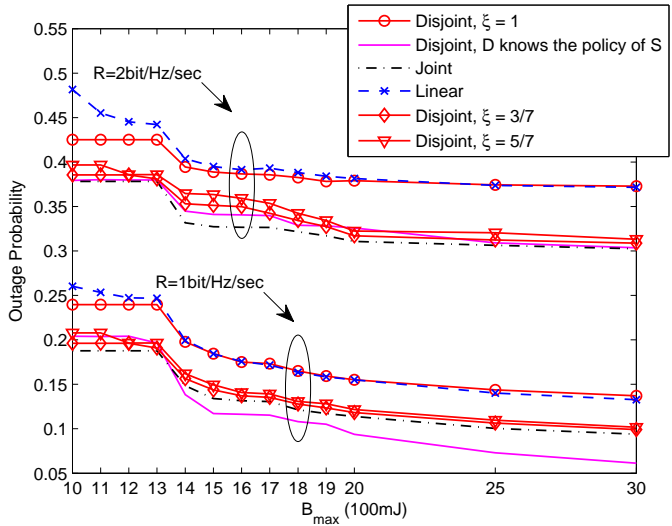}
\caption{Average outage probability with varying $B_{\max}$ and detection cost $\xi$, for $R \in \{1, 2 \}$bit/Hz/s, $\rho=0.5$, $K=4$.} \label{fig:Bmax}
\end{figure}

First in Fig.~\ref{fig:Bmax}, the impact of battery capacity $B_{\max}$ is observed. For all policies, the outage performance degrades when $B_{\max}$ decreases. Because of the energy overflow with finite battery, the effective energy that can be used is reduced, resulting in more outages due to power shortage. With receiver detection, the energy at $D$ is used more efficiently, and thus we observe less outage with smaller value of detection cost $\xi$. It is also shown that the linear policy is worse than the disjoint policy with small battery capacity. As the linear policy requires larger transmission power for retransmissions, small battery cannot provide enough energy, so that the transmission probability $P_S(u)$ for retransmissions $u > 1$ is reduced, which degrades the performance. While for large battery, linear policy shows performance gain over disjoint policy, which coincides with our results for infinite battery in Fig.~\ref{fig:outagelower}.

\begin{figure}[!t]
\centering
\includegraphics[width=3.8in]{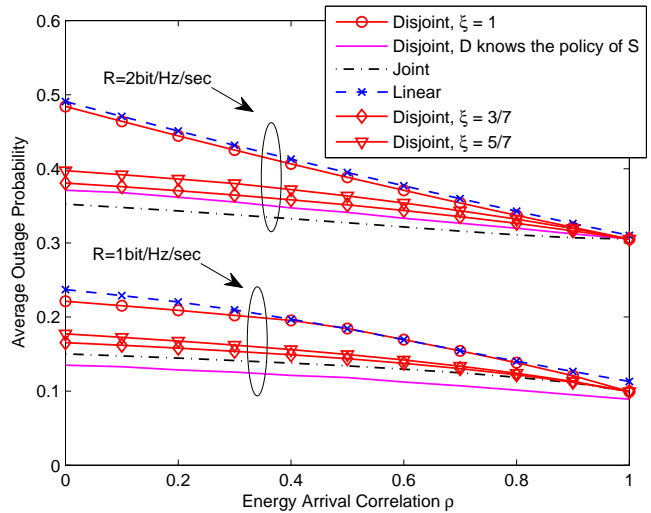}
\caption{Average outage probability with varying $\rho$ and detection cost $\xi$, for $R \in \{1, 2 \}$bit/Hz/s, $B_{\max}=3\lambda_S = 1500$mJ, $K=4$.} \label{fig:rho}
\end{figure}

We then show the results of varying energy arrival correlation $\rho$ in Fig.~\ref{fig:rho}. It is observed that energy arrival correlation helps improve the outage performance. Because correlation has the effect of ``matching" the energy states of $S$ and $D$, which can increase the probability that both nodes have enough energy to transmit and receive in the same slot. With infinite battery, as shown in Proposition \ref{prop:jointprob} and Theorem \ref{th:optimality2}, the performance of the joint policy is irrelevant to $\rho$, but under finite battery, its performance still improves with increasing $\rho$ \emph{slightly}. As the battery capacity is only $B_{\max} = 3\lambda_S$, linear power levels policy performs the worst.

In Fig.~\ref{fig:Bmax} and Fig.\ref{fig:rho}, we also provide curves for the performance of disjoint policy when $D$ knows that $S$ is taking the threshold-based policy. It turns out that knowing the policy of $S$ at $D$ lets disjoint policy perform close to joint policy, and even better than joint policy when $R$ is small, and they are both insensitive to $\rho$. The reason of their similar performance is that if $D$ knows the threshold of $S$, its energy is saved by avoiding unnecessary receiving when channel undergoes deep fading, while joint policy saves the energy of $D$ by avoiding unnecessary receiving when $S$ is not transmitting. We should remark that which of them performs better should depend on system parameters including channel statistics, rate requirement and EH profiles.

\begin{figure}[!t]
\centering
\includegraphics[width=3.8in]{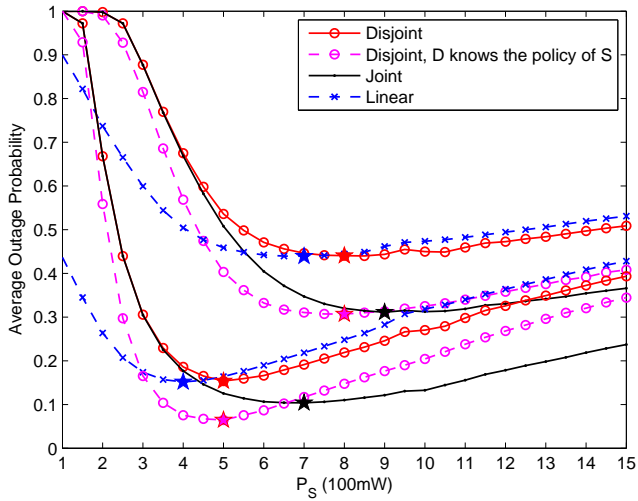}
\caption{Average outage probability with varying $P_S$ for disjoint/joint threshold-based policies and $P_S(0)$ for linear power levels policy, for $R \in \{1, 2 \}$bit/Hz/s, $B_{\max}=6\lambda_S = 3000$mJ, $\xi=1$, and $\rho = 0$, $K=4$.} \label{fig:findPS}
\end{figure}

To illustrate the procedure of searching for optimal threshold $P_S^*$ with Algorithm 1, in Fig.~\ref{fig:findPS}, the outage performance with different values of $P_S$ is shown. The optimal $P_S^*$ for each policy is labeled with a star symbol. Since $\rho=0$ and the battery capacity is relatively large, i.e., $B_{\max}=6\lambda_S$, the optimal thresholds are compared with those derived under infinite battery and no retransmission. For infinite battery with $R=2$bit/Hz/s, according to Theorem \ref{th:optimality1}, $P_S^* = 787$mW for disjoint and joint threshold-based policies. While in Fig.~\ref{fig:findPS}, due to the quantization granularity and the existence of retransmission, the optimal $P_S^*$ is $800$mW and $900$mW for these two policies respectively. The case of disjoint policy is very close to our theoretical result, while it seems that joint policy is more sensitive to the existence of retransmission. With $R=1$bit/Hz/s, similar observations are made. Please also note that the optimal starting power of the linear policy is smaller than $P_S^*$ of the disjoint policy, which confirms our intuition in Section \ref{subsec:linear}. With large battery, the linear policy shows better performance than the disjoint policy, and the gain is more evident with smaller $R$. Fig.~\ref{fig:findPS} also shows the results for disjoint policy when $D$ knows the policy of $S$. Although the optimal outage probability is better than the default disjoint policy, the optimal threshold is the same, which coincides with Corollary \ref{cor:optimality1}.

\begin{figure}[!t]
\centering
\includegraphics[width=3.7in]{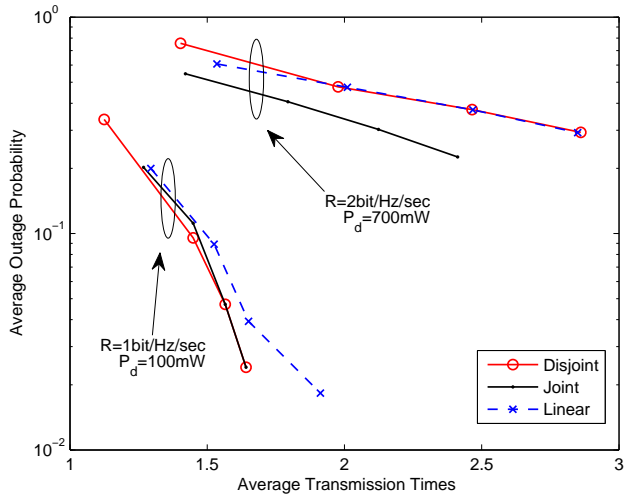}
\caption{The tradeoff between the minimum $p_{\text{out}}$ and the corresponding average transmission times $\tau$, with different maximum transmission times $K\in\{2,3,4,5\}$, for $R \in \{1, 2 \}$bit/Hz/s, $B_{\max}=6\lambda_S = 3000$mJ, $\xi=1$, $\rho=0.5$.}
\label{Fig4_1}
\end{figure}

Finally, we show the tradeoff between the average outage probability $p_{\text{out}}$ and the average transmission times $\tau$ in Fig. \ref{Fig4_1}. We change the number of maximum transmission times $K\in\{2,3,4,5\}$ and get different pairs of $(\tau, p_{\text{out}})$, i.e., the four points on each curve. When we set higher $K$, $p_{\text{out}}$ goes down. For $K=2$, the linear policy shows better performance over the disjoint policy. We also try a smaller value of $P_D = 100$mW, corresponding to uncoded system \cite{Cui05}. In this case, the smaller $P_D$, together with the smaller $R=1$bit/Hz/s, provides relatively sufficient energy availability, the linear policy can achieve the best outage probability, with the cost of larger delay.

\section{Conclusion}
We have investigated the outage minimization problem for a fading wireless link with EH transmitter and receiver. Three power control policies are proposed, namely the disjoint threshold-based policy, joint threshold-based policy, and the linear power levels policy. For infinite battery and independent energy arrivals, we have proved that the two threshold-based policies are optimal among all retransmission invariant power control policies, with or without BSI sharing between $S$ and $D$, respectively. Specifically for the joint threshold-based policy, the optimal threshold and its performance are independent of the energy arrival correlation $\rho$ between $S$ and $D$. We also consider the receiver detection, receiver processing, and different levels of BSI and policy information sharing in our analysis. To analyze the performance with finite battery capacity, we formulate the EH link into an FSMC, with which the optimal power thresholds can be calculated. With extensive numerical tests via FSMC, we have shown the performance of the three policies under different values of the battery capacity $B_{\max}$ and the energy arrival correlation $\rho$. It is found that larger correlation $\rho$ actually improves the outage performance. We also show that knowing the policy of $S$ at $D$ can enhance the performance of disjoint policy to be comparable to joint policy. Unlike the infinite battery case, the linear policy shows its performance gain over disjoint policy only with larger battery capacity or lower $R$. Finally, the tradeoff between the average outage probability and average retransmission times is presented.

Possible extensions of this work are as follows. First, in this paper we consider \emph{i.i.d.} block fading channel, while for Markovian channel, $S$ can exploit the ACKs to estimate the channel gain and even learn the EH process of $D$. MDP based schemes can be promising candidates. Moreover, when the power consumption in the idle state (when the transceiver is not transmitting or receiving) and the switch-on/off delay are non-negligible, an appropriate node sleeping policy \cite{Joseph09} should be designed, and can be jointly optimized with the power control policy.

\appendices

\section{Proof of Lemma \ref{lem:outagemin}}
\label{app:lem:outagemin}
Denote the power control policy as a probability dense function (PDF) $g(P)$, since $S$ only has CDI. From (\ref{eq:outage}), write the per-slot transmission \emph{successful} probability as the function of the total power consumption $P$:  $f(P) = \text{e}^{-\frac{(1+\alpha)(2^R-1)z}{P-P_{C,S}}}$ for $P > P_{C,S}$, otherwise $f(P) = 0$. The objective is to find the optimal $g(P)$
\begin{equation}
\label{eq:choutagemin}
\max_{g(P)} \int_0^{+\infty} g(P) f(P) \text{d} P,
\end{equation}
subject to $\int_0^{+\infty} Pg(P) \text{d} P = P_0$. It is easy to check that $f(P)$ is convex over $[0 , P_{b}]$ and concave over $[P_{b}, +\infty)$, where $P_{b} = \frac{(1+\alpha)(2^{R}-1)}{2} + P_{C,S}$.
Furthermore, by solving $\frac{f(P)}{P} = f'(P)$\footnote{For $P\ge P_{C,S}$, as $f(P)$ is differentiable on $(P_{C,S},+\infty)$.}, we get $P_a$ \footnote{The larger solution to the quadratic equation $(P-P_{C,S})^2 = (1+\alpha)(2^R-1)zP$.}. One can check that $P_a > P_b$, and geometrically the line from $(0,0)$ to $(P_a, f(P_a))$ is a tangent to $f(P)$. We thus construct a function
\begin{equation}
q(P) = \left\{
	\begin{aligned}
	\frac{f(P_a)P}{P_a}, &\quad P \in [0,P_a], \\
	f(P), & \quad P \in [P_a, +\infty],
	\end{aligned}
\right.
\end{equation}
which is a continuous and concave function as $P_a > P_b$. The functions and points used in the proof are illustrated in Fig.~\ref{fig:lemma3proof}. Since $f(P) \leq q(P)$, then $\forall g(P)$
\begin{equation}
\label{eq:g_q_ineq}
\int_0^{+\infty} g(P) f(P) \text{d} P \leq \int_0^{+\infty} g(P) q(P) \text{d} P.
\end{equation}
Denote
\begin{equation}
\label{eq:def_q}
\hat{g}(P) = \arg \max_{g(P)} \int_0^{+\infty} g(P) q(P) \text{d} P,
\end{equation}
subject to $\int_0^{+\infty} Pg(P) \text{d} P = P_0$. Since $q(P)$ is concave, for the random variable $P$ with PDF $g(P)$, we have \cite[(3.5)]{Boyd}
\[
 \mathbf{E}\{q(P)\} \leq q(\mathbf{E}\{P\}),
\]
i.e.,
\begin{equation}
\int_0^{+\infty} g(P) q(P) \text{d} P \leq q(P_0),
\end{equation}
and the equality holds when we set $\hat{g}(P) = \delta(P_0)$, where $\delta(x)$ is the Dirac delta function, which means that transmitter always transmits with $P_0$ power.
Let
\begin{equation}
\label{eq:optimalgp}
g^*(P) = \left\{
	\begin{aligned}
	\delta(P_0), &\quad P_0 \geq P_a, \\
	(1- \frac{P_0}{P_a})\delta(0) + \frac{P_0}{P_a}\delta(P_a), &\quad P_0 < P_a.
	\end{aligned}
\right.
\end{equation}
Note $g^*(P)$ satisfies $\int_0^{+\infty} Pg^*(P) \text{d} P = P_0$. We then have
\begin{align}
\int_0^{+\infty} g^*(P) f(P) \text{d} P & \overset{(a)}{=}  \int_0^{+\infty} g^*(P) q(P) \text{d} P \notag \\
& \overset{(b)}{=}  \int_0^{+\infty} \hat{g}(P) q(P) \text{d} P,
\end{align}
where $(a)$ holds due to the definition of $q(P)$ and $g^*(P)$, and $(b)$ holds since $q(P)$ is linear in $[0, P_a]$.
Therefore, according to (\ref{eq:g_q_ineq}) and (\ref{eq:def_q}), $g^*(P)$ is the optimal solution to (\ref{eq:choutagemin}).

\begin{figure}[!t]
\centering
\includegraphics[width=3.5in]{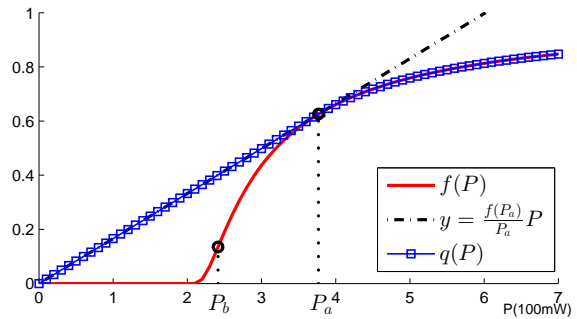}
\caption{In the illustrative figure we set $R=0.5$bit/Hz/s, $\alpha=1$, $P_{C,S}=200$mW, $z = 100$mW.}
\label{fig:lemma3proof}
\end{figure}

\section{Proof of Theorem \ref{th:optimality1} }
\label{app:th:optimality1}
For policies that do not depend on $u$, the objective is to maximize the per-slot successful probability $\phi$. Since energy arrivals $\{E_{S}^{t} \}$ and $\{E_{D}^{t} \}$ are independent, multiplying $\Psi_D$ with $f(P)$ (defined in Appendix \ref{app:lem:outagemin}) we get the successful probability of a transmission from $S$ as $\hat{f}(P) = \Psi_D f(P)$. As $\Psi_D$ is constant, $\hat{f}(P)$ keeps the properties of $f(P)$ used in the proof of Lemma \ref{lem:outagemin} and \ref{lem:outageminEH},  and thus the optimal policy at $S$ is threshold-based.

To find the optimal threshold $P_S^*$, we need to solve (P1) in (\ref{eq:P1}), where $\Psi(P_S, P_D)$ is given in Theorem \ref{th:Psi}. Note $c \triangleq (2^{R}-1)(1+\alpha)z$, $P_{D}$, $\lambda_{S}$ and $\lambda_{D}$ are all positive constants. We study the following cases.

Case 1: $\lambda_{S} \geq P_{S}$. In this case, $\Psi(P_{S}, P_{D}) = \min \left(1, \frac{\lambda_{D}}{P_{D}} \right)$ is independent of $P_{S}$. Thus $\phi(P_{S}) = \text{e}^{\left( - \frac{c}{P_{S} - P_{C, S}} \right)} \Psi(P_{S}, P_{D})$, and by obtaining the first-order derivative we have
\[
\frac{\text{d} \phi (P_{S}) }{\text{d} P_{S}} = \min \left(1, \frac{\lambda_{D}}{P_{D}} \right) \text{e}^{\left( - \frac{c}{P_{S} - P_{C, S}} \right)} \frac{c}{(P_{S} - P_{S, C})^{2}} > 0
\]
holds for $\forall P_{S}$. This further yields that $\phi(P_{S})$ is an increasing function in its domain $P_{S} \in [0, \lambda_{S}) \backslash \{P_{C, S} \}$. Obviously, $P_{S} > P_{S, C}$ should hold.

Case 2: $\lambda_{S} < P_{S}$. In this case, $\Psi(P_{S}, P_{D}) = \min \left(1, \frac{\lambda_{D}}{P_{D}} \right) \frac{\lambda_{S}}{P_{S}}$. Solving $\frac{\text{d} \phi (P_{S})}{\text{d} P_{S}} = 0$ leads to
\begin{equation}
c P_{S} = (P_{S}-P_{C, S})^{2}.
\end{equation}
By solving the above equation we have
\begin{equation}
B_{\text{th}}^{1, 2} = \frac{1}{2} \left[ (2P_{C, S} + c) \pm c^{\frac{1}{2}}{(4 P_{C, S}+c)}^{\frac{1}{2}} \right],
\label{Roots}
\end{equation}
and $B_{\text{th}}^{1} < B_{\text{th}}^{2}$, and we can check that $B_{\text{th}}^{1} < P_{C,S}$, which is not reasonable. So we define $B_{\text{th}} \triangleq B_{\text{th}}^{2}$, and it is guaranteed that $B_{\text{th}}^{2} > P_{C,S}$. If $\lambda_{S} < B_{\text{th}} $, then $\left. \frac{\text{d}^{2} \phi (P_{S}) }{\text{d} P_{S}^{2}}  \right|_{P_{S} = B_{\text{th}}} < 0$. Thus $P_{S}^{*} = B_{\text{th}}$ if $\lambda_{S} < B_{\text{th}}$. Otherwise, if $\lambda_{S} \geq B_{\text{th}}$, $\phi(P_{S})$ decreases in $[\lambda_{S}, +\infty)$, and hence $P_{S}^{*} = \lambda_{S}$.

From the two cases above, we finally have $P_{S}^{*} = \max \left(\lambda_{S}, B_{\text{th}} \right)$. The proof is completed.

\section{Proof of Proposition \ref{prop:jointprob}}
\label{app:prop:jointprob}
First, if $\max(\frac{\lambda_S}{P_S}, \frac{\lambda_D}{P_D}) \geq 1$, then following the proof of Lemma \ref{lem:transprob}, one or both of the two nodes always have enough energy to transmit or receive. The problem degenerates to calculating the one side transmission or receiving probability. Then according to Lemma \ref{lem:transprob} and \ref{lem:recvprob}, we have $\Psi(P_{S}, P_{D}) = \min(\Psi_S, \Psi_D) = \min(1, \frac{\lambda_S}{P_D}, \frac{\lambda_S}{P_D})$.

If $\max(\frac{\lambda_S}{P_S}, \frac{\lambda_D}{P_D}) < 1$, due to the constraint that the used energy should be less than the arrived energy, we have
\[
\Psi(P_{S}, P_{D})P_\beta \leq \lambda_\beta ,
\]
where $\beta \in \{S, D\}$, so that
\begin{equation}
\label{eq:app-ineq}
\Psi(P_{S}, P_{D}) \leq \min(\frac{\lambda_S}{P_S}, \frac{\lambda_D}{P_D}).
\end{equation}
Without loss of generality, assume $\frac{\lambda_S}{P_S} < \frac{\lambda_D}{P_D}$, then $\Psi(P_{S}, P_{D})P_D < \lambda_D$, which means $D$ always have enough energy to receive. The problem degenerates to calculating the transmission probability at $S$, and thus from Lemma \ref{lem:transprob}, $\Psi(P_{S}, P_{D}) = \frac{\lambda_S}{P_S}$.

For the special case of $\frac{\lambda_S}{P_S} = \frac{\lambda_D}{P_D}$, we must have $\Psi(P_{S}, P_{D}) = \frac{\lambda_S}{P_S} = \frac{\lambda_D}{P_D}$. Otherwise, according to (\ref{eq:app-ineq}), $\Psi(P_{S}, P_{D})P_\beta < \lambda_\beta$ ($\beta\in \{S, D\}$), which means that there will always be sufficient energy at both sides for transmission and receiving, which contradicts the fact that $\Psi(P_{S}, P_{D})<  \min(\frac{\lambda_S}{P_S}, \frac{\lambda_D}{P_D})<1$.

We thus complete the proof of the proposition.

\section{Proof of Theorem \ref{th:optimality2}}
\label{app:th:optimality2}
Denote $\Psi = \lim_{n \to +\infty} \frac{1}{n} \sum_{t = 1}^{n} \mathbf{E} \left[ \mathbf{1}_{B_{S}^{t} \geq P_{S}, B_{D}^{t} \geq P_{D} } \right]$ under any stationary power control policy not depending on $u$ at $S$. Due to the same reason as for (\ref{eq:app-ineq}), $\Psi \leq \frac{\lambda_D}{P_D}$. If $\lambda_D \geq P_D$, following the proof of Lemma \ref{lem:transprob}, $D$ always has enough energy, and thus according to Lemma \ref{lem:outageminEH}, the optimal policy at $S$ is threshold-based. And according to Theorem \ref{th:optimality1}, the optimal threshold $P_{S}^{*} = \max \left(\lambda_{S}, B_{\text{th}}\right)$.

If $\lambda_D < P_D$, consider a non-EH $S$ with average power constraint $\lambda_S$ first. Any policy at $S$ can be characterized as ``transmit with power consumption PDF $g(P)$ whenever $B_D^t \geq P_D$", as $S$ knows $B_D^t$. For the set of policies with the same $\Psi_D$ (the probability at any slot that $B_D^t \geq P_D$), denote the corresponding power PDF as $g(P)|_{\Psi_D}$. Following the similar proof procedure as in Lemma \ref{lem:outagemin}, the optimal $g(P)|_{\Psi_D}$ must have the same structure as (\ref{eq:optimalgp}). Therefore the optimal policy must be threshold-based. Denote the transmission threshold as $P_S$, following the same procedure in the proof of Proposition \ref{prop:jointprob}, we get the same transmission and receiving probability $\Psi_{P_S,P_D}$ as for the EH $S$ case. Therefore the optimal solutions of the two systems are the same, which indicates that for EH $S$, the threshold-based policy is optimal.

Finally, solving
\begin{equation}
\max_{P_{S} \geq 0} ~ \phi(P_{S}) \triangleq \exp \left( - \frac{(2^{R}-1)z}{P_{\text{tx}}} \right)  \min(1, \frac{\lambda_S}{P_S}, \frac{\lambda_D}{P_D})
\end{equation}
will get the optimal $P_S^*$ in the theorem.

\section{Proof of Proposition \ref{prop:recvprob2}}
\label{app:prop:recvprob2}
Similar to the proof of Lemma \ref{lem:transprob}, an energy packet queue $\mathcal{Q}$ is also constructed for node $D$. The difference is that, in $\mathcal{Q}$ whenever an energy packet, of volume $P_D$, is served by the ``server", some energy with amount of $(1-\xi)P_D$ gets back to the ``source" instantaneously with probability $1-\Psi_S$, where $\Psi_S = \min\{1,\frac{\lambda_S}{P_S}\}$ is the transmission probability of $S$. This is because $D$ only spends $\xi P_D$ when $S$ is not transmitting. Recall that only when $B_D^t\geq P_D$ will $D$ try to detect, and otherwise the detection is useless. As a result, the equivalent energy arrival rate is
\[
\lambda_D + (1-\xi)P_D(1-\Psi_S)\Psi_D.
\]
Following similar procedures of proving Lemma \ref{lem:transprob}, if $\lambda_D + (1-\xi)P_D(1-\Psi_S)\Psi_D \ge P_D$, then $\Psi_D = 1$. This requires $\lambda_D \ge [1-(1-\xi)(1-\Psi_S)]P_D$. Otherwise, if $\lambda_D + (1-\xi)P_D(1-\Psi_S)\Psi_D < P_D$, according to Little's Law \cite{Kleinrock}, we have
\begin{equation}
\Psi_D = \frac{\lambda_D + (1-\xi)P_D(1-\Psi_S)\Psi_D}{P_D},
\end{equation}
and so that
\begin{equation}
\Psi_D = \frac{\lambda_D}{[1-(1-\xi)(1-\Psi_S)]P_D}.
\end{equation}
This completes the proof.

\section{Proof of Proposition \ref{prop:recvprob3}}
\label{app:prop:recvprob3}
Similar to the proof of Proposition \ref{prop:recvprob2}, an energy packet queue $\mathcal{Q}$ is constructed for node $D$. But the mount of energy gets back to the ``source" is different. Recall that only when $B_D^t\geq P_F$ will $D$ try to detect and receive, and otherwise the detection is useless. As a result, the equivalent energy arrival rate is
\[
\lambda_D + \underbrace{(1-\xi\eta)P_F(1-\Psi_S)\Psi_D}_{\text{(a)}} + \underbrace{(1-\eta)P_Fp(P_{\text{tx}})\Psi_S\Psi_D}_{(b)},
\]
where $\Psi_S = \min\{1,\frac{\lambda_S}{P_S}\}$ is the transmission probability of $S$, and: (a) represents the case that $S$ is not transmitting, and thus $(1-\xi\eta)P_F$ of energy is put back to the energy queue; (b) represents the case that $S$ is transmitting but channel outage occurs (with probability $p(P_{\text{tx}})$), and thus $(1-\eta)P_F$ amount of energy is put back to the energy queue.
Following similar procedures of proving Lemma \ref{lem:transprob}, if $\lambda_D + (1-\xi\eta)P_F(1-\Psi_S)\Psi_D + (1-\eta)P_Fp(P_{\text{tx}})\Psi_S\Psi_D \ge P_F$, then $\Psi_D = 1$. This requires $\lambda_D \ge [1-(1-\xi\eta)(1-\Psi_S)-(1-\eta)p(P_{\text{tx}})\Psi_S]P_F$. Otherwise, according to Little's Law \cite{Kleinrock}, we have
\begin{equation}
\Psi_D \!=\! \frac{\lambda_D \!+\! (1\!-\!\xi\eta)P_F(1\!-\!\Psi_S)\Psi_D \!+\! (1\!-\!\eta)P_Fp(P_{\text{tx}})\Psi_S\Psi_D}{P_F},
\end{equation}
and so that
\begin{equation}
\Psi_D = \frac{\lambda_D}{[1-(1-\xi\eta)(1-\Psi_S)-(1-\eta)p(P_{\text{tx}})\Psi_S]P_F}.
\end{equation}
This completes the proof.

\section{Proof of Corollary \ref{cor:Psi2}}
\label{app:cor:Psi2}
When $D$ knows that $S$ is taking the disjoint threshold-based policy with threshold value $P_S$, $\Psi_D$ is different from (\ref{eq:recvprob}) in Lemma \ref{lem:recvprob}, while $\Psi_S$ remains the same as (\ref{eq:transprob}) in Lemma \ref{lem:transprob}. To calculate $\Psi_D$, similar to the proof of Proposition \ref{prop:recvprob2}, an energy packet queue $\mathcal{Q}$ is constructed for node $D$. But the mount of energy gets back to the ``source" is $P_D$ with probability $p(P_{\text{tx}})$, corresponding to the occurrence of channel outage. Because $D$ knows the threshold power $P_S$ and the CSI, if the channel is in outage, it is not necessary to spend any receiving energy. As a result, the equivalent energy arrival rate is
\[
\lambda_D + p(P_{\text{tx}})P_D\Psi_D.
\]
Following similar procedures of proving Lemma \ref{lem:transprob}, if $\lambda_D + p(P_{\text{tx}})P_D\Psi_D \ge P_D$, then $\Psi_D = 1$. This requires $\lambda_D \ge [1-p(P_{\text{tx}})]P_D$. Otherwise, if $\lambda_D + p(P_{\text{tx}})P_D\Psi_D < P_D$, according to Little's Law \cite{Kleinrock}, we have
\begin{equation}
\Psi_D = \frac{\lambda_D + p(P_{\text{tx}})P_D\Psi_D}{P_D},
\end{equation}
and so that
\begin{equation}
\Psi_D = \frac{\lambda_D}{[1-p(P_{\text{tx}})]P_D}.
\end{equation}
As a result, given $\Psi_S =  \min\left(1, \frac{\lambda_{S}}{P_S}\right)$, and since the events $B_S^t\ge P_S$ and $B_D^t \ge P_D$ are mutually independent, the corollary is proved.

\section{Proof of Corollary \ref{cor:optimality1}}
\label{app:cor:optimality1}
Denote the power control policy as a probability dense function (PDF) $g(P)$, and denote the per-slot transmission successful probability with transmission power $P$ as $f(P) = 1-p(P)$, where $p(P)$ is the channel outage probability defined in (\ref{eq:outage}). In this way, the average transmission successful probability is
\begin{equation}
p_{\text{suc}} =  \int_0^{+\infty} g(P) f(P) \text{d} P.
\end{equation}
Following the way of deriving $\Psi_D$ in Appendix \ref{app:cor:Psi2}, by replacing $1- p(P_{\text{tx}})$ (for disjoint threshold-based policy) with $p_{\text{suc}}$ (for power control policy described by $g(P)$), we get
\begin{equation}
\Psi_D =\min\left(1, \frac{\lambda_D}{p_{\text{suc}}P_D} \right).
\end{equation}
Then, the average successful probability $\phi$ becomes
\begin{equation}
\phi = p_{\text{suc}}\Psi_D = \min\left(p_{\text{suc}}, \frac{\lambda_D}{P_D} \right),
\end{equation}
where the first equality holds since the channel condition and the choice of the transmission power are independent, as $S$ only has CDI. From this equation, it is easy to see that maximizing $\phi$ is equivalent to maximizing $p_{\text{suc}}$. As a result, according to Lemma \ref{lem:outagemin} and \ref{lem:outageminEH}, the optimal policy at $S$ is threshold-based, and the optimal threshold $P_S^*$ is the same as the one in Theorem \ref{th:optimality1}.

\end{document}